\documentclass{amsart}

\usepackage[normalem]{ulem}

\usepackage{latexsym} 
\usepackage{amsmath} 
\usepackage{epsfig}
\usepackage{amssymb}
\usepackage{enumerate}
\usepackage{color}
\usepackage{graphicx}
\usepackage{hyperref}

\usepackage[dvipsnames]{xcolor}

\parskip=1em 
\setlength{\unitlength}{1mm}

\newtheorem{theorem}{Theorem}[section]

\newtheorem{lemma}[theorem]{Lemma}
\newtheorem{sublemma}{}[theorem]
\newtheorem{corollary}[theorem]{Corollary}

\newtheorem{observation}[theorem]{Observation}
\newtheoremstyle{problem}{}{}{}{0pt}{}{}{0pt}{}
\theoremstyle{problem}

\newcommand{\cC}{{\mathcal C}}
\newcommand{\cF}{{\mathcal F}}
\newcommand{\cG}{{\mathcal G}}

\newcommand{\cI}{{\mathcal I}}

\newcommand{\cN}{{\mathcal N}}
\newcommand{\cO}{{\mathcal O}}

\newcommand{\cU}{{\mathcal U}}

\title[]{On the existence of funneled orientations for classes of rooted phylogenetic networks}

\author{Janosch D\"ocker and Simone Linz} 

\thanks{We thank the New Zealand Marsden Fund for their financial support.}

\address{School of Computer Science, University of Auckland, Auckland, New Zealand}
\email{janosch.doecker@auckland.ac.nz}

\address{School of Computer Science, University of Auckland, Auckland, New Zealand}
\email{s.linz@auckland.ac.nz}

\keywords{graph orientation, network classes, phylogenetic network, rooted, unrooted}

\date{\today}

\begin{document}

\begin{abstract}
Recently, there has been a growing interest in the relationships between unrooted  and rooted phylogenetic networks. In this context, a natural question to ask is if an unrooted phylogenetic network $\cU$ can be oriented as a rooted phylogenetic network such that the latter satisfies certain structural properties. In a recent preprint, Bulteau et al. claim that it is computational hard to decide if $\cU$ has a funneled (resp.~funneled tree-child) orientation, for when the internal vertices of $\cU$ have degree at most 5. Unfortunately, the proof of  their funneled tree-child result appears to be incorrect. In this paper, we present a corrected proof and show that hardness remains for other popular classes of rooted phylogenetic networks such as funneled normal and funneled reticulation-visible. Additionally, our results hold regardless of whether $\cU$ is rooted at an existing vertex or by subdividing an edge with the root.
\end{abstract}

\maketitle

\section{Introduction}

Phylogenetic networks are commonly used to represent evolutionary relationships between taxa such as species, individuals of a population, or viruses. In general terms, phylogenetic networks are graphs whose vertices represent taxa and edges represent inferred evolutionary relationships. Unrooted phylogenetic networks are undirected graphs that do not contain any explicit information about the direction of evolution such as ancestor-descendant relationships. To include such information, rooted phylogenetic networks are used which are directed acyclic graphs with a single source (the root) and, as in the unrooted case, leaves representing extant taxa, and internal vertices representing hypothetical or extinct taxa. In addition, there also exists a wide range of well-studied classes of phylogenetic networks that are each characterised by certain structural properties such as level-$k$ and tree-based networks, which have been defined for rooted and unrooted phylogenetic networks, as well as  tree-child and  reticulation-visible networks, which have been defined for rooted phylogenetic networks only. For a comprehensive overview of phylogenetic network classes, we refer the interested reader to~\cite{kong22}. Network classes that are used throughout this paper are formally defined in the next section.

Initiated by a paper by Huber et al.~\cite{huber22}, there has recently been a growing interest in the relationships between unrooted and rooted phylogenetic networks. More specifically, the authors  have investigated several questions in the context of orienting an unrooted phylogenetic network $\cU$ as a rooted phylogenetic network. The process of orienting $\cU$ consists of subdividing an edge of $\cU$ with a new vertex $\rho$ and assigning a direction to each edge such that the resulting directed graph is a rooted phylogenetic network with root $\rho$. Given $\cU$, results for the following decision problems have been established in~\cite{huber22}: 
\begin{enumerate}[(i)]
\item \textsc{Constrained Orientation}: If the desired in-degree for each vertex of $\cU$ is given as well as an edge of $\cU$ to be subdivided with $\rho$, can $\cU$ be oriented as a rooted phylogenetic network with root $\rho$ that satisfies all in-degree constraints?   
\item \textsc{$\cC$-Orientation}: Can $\cU$ be oriented such that the resulting digraph belongs to a given class~$\cC$ of phylogenetic networks?  
\end{enumerate}
More specifically, Huber et al.~\cite{huber22} present a polynomial-time algorithm for \textsc{Constrained Orientation} that computes an orientation of $\cU$ satisfying the given in-degree constraints or outputs that there is none.~Furthermore, they have shown that, if $\cU$ is binary  (i.e., each internal vertex has degree 3), then it is NP-complete to decide if $\cU$ has a tree-based orientation and, lastly, if $\cU$ is binary  and $\cC$ satisfies certain properties, then \textsc{$\cC$-Orientation} is fixed-parameter tractable with respect to the so-called level of $\cU$. Notably, the latter result includes the popular class of binary tree-child networks. In the same paper, the authors also state two open questions.
\begin{enumerate}[Q1]
\item Given an unrooted binary phylogenetic network $\cU$, can it be decided in polynomial time if $\cU$ has a tree-child orientation?

\item Given an unrooted phylogenetic network $\cU$, can it be decided in polynomial time if $\cU$ has a funneled orientation?
\end{enumerate}

\noindent The term {\it funneled} was coined by Huber et al.~\cite[p.\ 26]{huber22} and refers to the restriction that each reticulation (i.e., a vertex of a rooted phylogenetic network with in-degree at least 2) has out-degree 1. As noted by the authors, it is common in the literature to define reticulations in this way.

Recently Bulteau et al.~\cite[Cor.\ 4]{bulteau23} have shown that it is NP-complete to decide if $\cU$ has a funneled orientation if each vertex of $\cU$ has degree at most 5 and a vertex of $\cU$ is chosen as root (instead of subdividing an edge of $\cU$ with the root). 
 We remark that Bulteau et al. refer to a funneled orientation as a {\it valid} orientation and that their construction to establish NP-completeness allows for vertices of degree 2 in $\cU$ which are explicitly excluded in the definition of such a network in~\cite{huber22} as well as in most other literature on phylogenetic networks because degree-2 vertices do not carry any biological meaning. Relatedly, Garvardt et al.~\cite{garvardt23} have analysed the parameterised complexity of a variant of the {\sc Constraint Orientation} problem, where the in-degree of each vertex $v$ is not fixed but can take on a value of a list that is associated with $v$. On the positive side, Bulteau et al.~\cite[Cor.\ 3]{bulteau23}  have shown that Q2 can be answered affirmatively if $\cU$ has degree at most 3. In particular, they have established a linear-time algorithm that computes a funneled orientation of $\cU$ if it exists or, otherwise, that correctly concludes that no such  orientation exists.
This last result is in line with an earlier result by Janssen et al.~\cite[Lem.\ 4.13 and 4.14]{janssen18} who characterised those unrooted binary phylogenetic networks that can be oriented as rooted binary phylogenetic networks. Inspecting the proof of their characterisation~\cite[Lem.\ 4.13]{janssen18}, their result also yields a polynomial-time algorithm for deciding if an unrooted binary phylogenetic network can be oriented as a rooted binary phylogenetic network.

Turning to Q1, Maeda et al.~\cite{maeda23} have recently presented several necessary conditions for when an unrooted binary phylogenetic network has a tree-child orientation. It is worth noting that, in the binary case, any orientation as a rooted phylogenetic network is also funneled. 
Moreover, given an unrooted phylogenetic network $\cU$ whose non-leaf vertices have degree at least 2 and at most 5 and given a designated root vertex $v$ of $\cU$, Bulteau et al.~\cite[Cor.\ 5]{bulteau23} claim that it is NP-hard to decide if $\cU$ has a funneled tree-child orientation with root $v$. Unfortunately, as we will show later, the proof of this corollary is incorrect. In summary, Q1 remains open.

In this paper, we make a step towards answering Q1 by presenting a corrected proof that establishes NP-completeness for deciding if an unrooted phylogenetic network with maximum degree 5 has a funneled tree-child orientation.~In comparison to Bulteau et al.~\cite{bulteau23}, our result does not require a designated root to be part of the input and our construction yields unrooted phylogenetic networks without degree-2 vertices. Although our result builds on Bulteau et al., correcting their proof requires novel gadgets and careful reasoning. We also show that our result extends to several other classes of funneled orientations such as funneled tree-sibling and funneled normal networks.

The remainder of the paper is organised as follows. Section~\ref{sec:preliminaries} contains definitions and formal problem statements. In Section~\ref{sec:tree-child-orientations}, we show that, although the proof by Bulteau et al.~\cite[Cor.\ 5]{bulteau23} is  incorrect, the result that it is NP-complete to decide if an unrooted phylogenetic network $\cU$ with degree at most 5 has a funneled tree-child orientation can be recovered. We then show  in Section~\ref{sec:orientation-other-classes} that hardness remains if each non-leaf vertex of $\cU$ has degree exactly 5. This latter result is subsequently used to  establish NP-completeness for deciding if $\cU$ has a funneled tree-sibling, funneled reticulation-visible, or funneled normal orientation. We remark that all results hold under two rooting variants: Variant $A$ as introduced by Huber et al.~\cite{huber22} subdivides an edge of $\cU$ with the root, and Variant $B$ chooses a vertex of $\cU$ to be the root. If $\cU$ is binary, one typically chooses a leaf of $\cU$ to be the root since, otherwise, the resulting orientation is not binary. Indeed, all constructions presented in this paper choose a leaf of $\cU$ as the root when establishing hardness under Variant $B$.

\section{Preliminaries}\label{sec:preliminaries}
In this section, we introduce notation and terminology that is used throughout the rest of the paper, and formally state the decision problems whose computational complexity we investigate in subsequent sections.

\subsection{Phylogenetic networks.} Let $X$ be a non-empty finite set. An \emph{unrooted phylogenetic network} $\cU$ on $X$ is a simple undirected graph with no degree-2 vertex and with a bijection between the vertices in $\cU$ that have degree 1 and the set $X$, that is, the \emph{leaves} of $\cU$ are bijectively labelled with elements from $X$. We will freely refer to leaves using their labels. Two distinct vertices $v$ and $w$ of $\cU$ are called \emph{neighbours} if $\{v, w\}$ is an edge in $\cU$. 
 
A \emph{rooted phylogenetic network} $\cN$ on $X$ is a directed acyclic graph with no loop and no parallel arcs that satisfies the following properties:
\begin{enumerate}[(i)]
\item there is a unique vertex $\rho$, the \emph{root}, with in-degree 0 and out-degree at least~$1$,
\item a vertex of out-degree 0 has in-degree 1 and the set of vertices with out-degree~0 is $X$, and
\item each internal vertex has either in-degree 1 and out-degree at least 2, or in-degree at least 2 and out-degree at least 1.
\end{enumerate}

\noindent For technical reasons, we sometimes consider a class of networks that is more general than the class of rooted (resp. unrooted) phylogenetic networks. Specifically, we refer to a network that is a subdivision of a rooted phylogenetic network on $X$ (resp. unrooted phylogenetic network on $X$) as a {\it rooted pseudo network} on $X$ (resp. {\it unrooted pseudo network} on $X$).

Now, let $\cN$ be a rooted pseudo network on $X$. For an arc $e = (u,v)$ in $\cN$, we say that $u$ is a \emph{parent} of $v$ and $v$ is a \emph{child} of $u$. If two distinct vertices $v$ and $w$ of $\cN$ have a common parent, we say that $v$ and $w$ are \emph{siblings}. Moreover, a vertex of $\cN$ is called a \emph{tree vertex} if it has in-degree $1$ and out-degree at least $1$ and is called a \emph{reticulation} if it has in-degree at least $2$ and out-degree at least $1$. Lastly, an arc $(u,v)$ is called a \emph{shortcut} in $\cN$ if  there is a directed path from $u$ to $v$ in $\cN$ that does not contain $(u,v)$.     

We next define five classes of rooted pseudo networks. These classes are typically defined in the context of rooted phylogenetic networks. However, their definitions naturally carry over to rooted pseudo networks as follows. Let $\cN$ be a rooted pseudo network. We say that $\cN$ is
\begin{enumerate}[(i)]
\item \emph{tree-child} if each non-leaf vertex of $\cN$ has a child that is a tree vertex or a leaf,
\item {\it normal} if $\cN$ is tree-child and does not contain a shortcut,
\item {\it tree-sibling} if each reticulation of $\cN$ has a sibling that is a tree vertex or a leaf,
\item \emph{reticulation-visible} if, for each reticulation $v$ of $\cN$, there is a leaf $\ell$ such that each directed path from the root of $\cN$ to $\ell$ contains $v$, and
\item \emph{funneled} if each reticulation of $\cN$ has out-degree 1. 
\end{enumerate}
Since the above five classes are only defined for rooted pseudo networks, we will omit the adjective `rooted' when referring to them. For more details on these network classes, see~\cite{cardona09, cardona08, iersel10,kong22,willson10}.

\subsection {Orientations.}  \label{sec:orientations}
A \emph{connector network} $\cG_k$ with $k\geq 0$ is a graph that can be obtained from an unrooted pseudo network $\cU$ by deleting the  label of $k$ leaves in~$\cU$. We call a degree-1 vertex of $\cG_k$  a {\it connector leaf} or {\it unlabelled leaf} if it has no label, and a {\it non-connector leaf} or {\it labelled leaf} otherwise. Furthermore, we refer to $\cU$ as the \emph{partner network} of $\cG_k$. Note that $\cU$ is the unique partner network of $\cG_k$ up to those leaf labels that exist in $\cU$ and not in $\cG_k$. Lastly, for $k=1$, let $\cG_1$ be a connector network. We say that an unrooted pseudo network $\cU$ contains $\cG_1$ as a \emph{pending subgraph} if $\cU$ can be obtained from $\cG_1$ by identifying its connector leaf with a vertex $v$ of some unrooted pseudo network $\cU'$ and deleting the label if $v$ is a leaf of $\cU'$. For example, the unrooted pseudo network that underlies the network shown in Figure~\ref{fig:root-forcing}(ii) has both $\cG_1$ and $\cG_1'$ that are shown in (i) of the same figure as a pending subgraph. 

For the purpose of the upcoming definitions, let $\cG_k$ be a connector network with $k\geq 0$. Furthermore,  let $L$ be the set of all labelled leaves of $\cG_k$, and let $U$ be the set of all unlabelled leaves of $\cG_k$. We next define a process that assigns a direction to each edge of $\cG_k$ and, if $k\leq 1$, also introduces a root vertex $\rho$. 

First, for $k\leq 1$, an \emph{orientation} of $\cG_k$ is obtained by either 

\noindent {\bf Variant $A$.} subdividing an edge of $\cG_k$ with a new root vertex $\rho$ and then assigning a direction to each edge such that $\rho$ has in-degree 0 and each element in $L$ has out-degree 0, or

\noindent {\bf Variant $B$.} choosing a vertex $u$  of $\cG_k$  to be the root $\rho$ by setting $u=\rho$, deleting the label of $u$ if $u$ is a labelled leaf, and then assigning a direction to each edge such that $\rho$ has in-degree 0 and each element in $L$ if $u \notin L$  (resp. $L\backslash\{u\}$ if $u\in L$) has out-degree 0. 

\noindent Note that such an orientation always exists.
Now, let $\cC$ be a class of rooted (pseudo) networks and let $R\in\{A,B\}$. For $k=0$, we say that an unrooted pseudo network $\cG_0$ has a {\it $\cC_R$-orientation} or, equivalently, that $\cG_0$ is {\it $\cC_R$-orientable} if there exists an orientation $\cO$ of $\cG_0$ such that the following properties are satisfied.
\begin{enumerate}[(i)]
\item  $\cO$ is obtained from $\cG_0$ by following Variant $R$,
\item $\cO$ is a network in $\cC$ with root $\rho$,
\item if $\rho\notin L$, then $\cO$ has $|L|$ leaves, and
\item if  $\rho\in L$, then $\cO$ has $|L|-1$ leaves.
\end{enumerate}
Turning to $k=1$, we say that a connector network $\cG_1$  with a unique connector leaf $r$ has a {\it $\cC_R$-orientation} or, equivalently, that $\cG_1$ is {\it $\cC_R$-orientable} if its partner network has a $\cC_R$-orientation. For $R=A$, $\cG_1$ is called \emph{$\cC_A$-root-forcing} if  there is no $\cC_A$-orientation of $\cG_1$ that subdivides the (unique) edge incident with $r$ with a new root vertex. Similarly, for $R=B$, $\cG_1$ is called \emph{$\cC_B$-root-forcing} if there is no $\cC_B$-orientation of $\cG_1$ that chooses $r$ to be the root. 
\begin{figure}
\includegraphics[width=.6\textwidth]{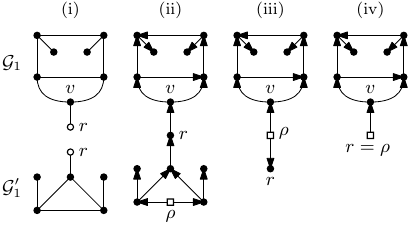}
\caption{(i) Two connector networks $\cG_1$ and $\cG'_1$ each with a single connector leaf $r$. 
(ii) A rooted pseudo network $\cN$  obtained by identifying $r$ in $\cG_1$ with $r$ in $\cG'_1$ and orienting the resulting unrooted pseudo network  according to Variant $A$. Observe that $\cN$ is not tree-child. (iii)  An orientation of $\cG_1$ following Variant $A$.  (iv)  An orientation of $\cG_1$ following Variant $B$.  As we will see in Lemma~\ref{lem:root_gadget}, for $\cC$ being the class of tree-child network, $\cG_1$ is $\cC_A$-root-forcing because every $\cC_A$-orientation of $\cG_1$ places the root on an edge that is not $\{r,v\}$.}
\label{fig:root-forcing}
\end{figure}  
To illustrate, see Figure~\ref{fig:root-forcing} for an example of a connector network with a single connector leaf that is $\cC_A$-root-forcing  for when $\cC$ is the class of tree-child networks. In Figure~\ref{fig:root-forcing} as well as in all other figures of this paper, the root is indicated by a small square, connector leaves are indicated by open circles whereas all other vertices are indicated by filled circles. Moreover, we frequently  label internal vertices in figures. Their only purpose is to make references. Indeed, they should not be regarded as genuine labels as those used for leaves of pseudo networks and non-connector leaves of connector networks. Furthermore, labels of leaves of pseudo networks and non-connector leaves of connector networks are sometimes omitted from figures to facilitate a clear presentation unless they are of importance (e.g. if they are explicitly mentioned in a proof).
 
Second, for $k\geq 2$, an {\it orientation} $\cO$ of $\cG_k$ is obtained by assigning a direction to each edge in $\cG_k$. 
Let $\cO$ be an acyclic orientation of $\cG_k$ such that each vertex in $L$ has out-degree 0, and each vertex that is not in $L\cup U$ has in-degree at least 1 and out-degree at least 1, then $\cO$  is called
\begin{enumerate}[(i)]
\item \emph{$\cF$-compatible} if each vertex of $\cO$ with in-degree at least 2 has out-degree 1,
\item \emph{$\mathcal{TC}$-compatible} if each vertex of $\cO$ that is not in $L\cup U$  has a child that is either in $L\cup U$ or a vertex with in-degree 1 and out-degree at least 1, and
\item  \emph{strongly $\mathcal{TC}$-compatible} if each vertex of $\cO$ that is not in $L\cup U$ has a child that is either in $L$ or a vertex with in-degree 1 and out-degree at least 1.  
\end{enumerate}

\subsection{Problem statements} Let $\cC$ be a class of rooted phylogenetic networks, and let $R\in\{A,B\}$. In this paper, we analyse the following decision problem. 

\begin{center}
\noindent\fbox{\parbox{.95\textwidth}{
\noindent\textsc{Funneled $\cC_R$-Orientation}\\
\textbf{Instance.} An unrooted phylogenetic network $\cU$.\\
\textbf{Question.} Does there exist a funneled $\cC_R$-orientation of $\cU$?
}}
\end{center}

\noindent If $R=B$, we emphasise that any vertex of $\cU$ can be chosen as root. In particular, no preselected root is part of the input to {\sc Funneled $\cC_B$-Orientation.} To obtain NP-hardness results for {\sc Funneled $\cC_R$-Orientation} in the next section, we design three gadgets, which are connector networks. 
Multiple copies of these gadgets are then combined such that the resulting graph is an unrooted pseudo network or an unrooted phylogenetic network $\cU$. We then orient each gadget separately such that, collectively, these orientations will result in a funneled $\cC_R$-orientation $\cN$ of $\cU$ if such an orientation exists. To provide some intuition for some of the more technical definitions given in Section~\ref{sec:orientations}, if a gadget is a connector network $\cG_1$ with exactly one connector leaf, then the upcoming hardness constructions enforce that the root of $\cN$ is contained in the subgraph of $\cN$ that is obtained by orienting $\cG_1$. On the other hand, if a gadget is a connector network $\cG_k$ with $k\geq 2$, then the hardness constructions enforce that the root of $\cN$ is not contained in the subgraph of $\cN$ that is obtained by orienting~$\cG_k$.

We now turn to the statement of two Boolean satisfiability problems that we later use to establish NP-hardness of \textsc{Funneled $\cC_R$-Orientation}. Let $V = \{x_1, x_2, \ldots, x_n\}$ be a set of variables. A \emph{literal} is a variable $x_i$ or its negation $\bar{x}_i$, and a \emph{clause} is a disjunction of a subset of $\{x_i, \bar{x}_i: i\in \{1, 2, \ldots, n\}\}$. If a clause contains exactly $k$ distinct literals for $k\ge 1$, then it is called a {\em $k$-clause}. We say that a clause is \emph{positive} if it is a subset of $V$. A {\it Boolean formula in conjunctive normal form} (or short, {\it Boolean formula}) is a conjunction of clauses, i.e., an expression of the form $\varphi = \bigwedge_{j = 1}^m c_j$, where $c_j$ is a clause for all $j$. Now, let $\varphi$ be a Boolean formula. We say that $\varphi$ is \emph{positive} if each clause of $\varphi$ is positive, i.e., no clause contains an element in $\{\bar{x}_1,\bar{x}_2,\ldots,\bar{x}_n\}$.  A {\em truth assignment} for $V$ is a mapping $\beta \colon V \rightarrow \{T, F\}$, where~$T$ represents the truth value True and $F$ represents the truth value False. A truth assignment $\beta$ \emph{satisfies} $\varphi$ if at least one literal of each clause evaluates to $T$ under $\beta$. If $\beta$  satisfies $\varphi$ and has the additional property that at least one literal of each clause evaluates to $F$, we say that $\beta$ \emph{nae-satisfies} $\varphi$. 

\begin{center}
\noindent\fbox{\parbox{.95\textwidth}{
\noindent\textsc{Positive Not-All-Equal $(2,3)$-SAT}\\
\textbf{Instance.} A set $V=\{x_1,x_2,\ldots,x_n\}$ of variables and a collection $C=\{c_1,c_2,\ldots,c_m\}$ of positive clauses over $V$ such that each clause consists of either two or three distinct variables.\\
\textbf{Question.} Is there a truth assignment for $V$ that nae-satisfies $C$? 
}}
\end{center}

\begin{center}
\noindent\fbox{\parbox{.95\textwidth}{
\noindent\textsc{Positive 1-in-3 SAT}\\
\textbf{Instance.} A set  $V=\{x_1,x_2,\ldots,x_n\}$ of variables and a collection $C=\{c_1,c_2,\ldots,c_m\}$ of positive clauses over $V$ such that each clause consists of 3 distinct variables.\\
\textbf{Question.} Is there a truth assignment for $V$ that sets exactly one variable in each clause in $C$ to $T$? 
}}
\end{center}

\section{Funneled Tree-Child Orientations}\label{sec:tree-child-orientations}
In this section, we analyse the aforementioned result by Bulteau et al.~\cite{bulteau23} and establish NP-completeness of {\sc Funneled $\cC_R$-Orientation} for when an unrooted phylogenetic network with maximum degree 5 is given and $\cC$ is the class of rooted phylogenetic networks that are tree-child.

\subsection{Construction by Bulteau et al.}\label{sec:bulteau}

As mentioned in the introduction, an incorrect proof has recently appeared in the literature~\cite{bulteau23} that claims NP-completeness for the following decision problem which is slightly different from  {\sc Funneled $\mathcal{C}_R$-Orientation}: Given an unrooted pseudo network $\cU$ on $X$ with maximum degree 5 and a vertex $v$ of $\cU$, does there exist an orientation $\cO$ of $\cU$ such that $\cO$ is a rooted pseudo network on $X$ (resp. $X\backslash\{v\}$ if $v\in X$) with root $v$ that is funneled and tree-child? 

We next describe the high-level idea of the construction that is presented in~\cite{bulteau23} and then explain why the tree-child orientation result does not hold. Given an instance of \textsc{Positive Not-All-Equal $(2,3)$-SAT}, the authors construct an unrooted pseudo network $\cU$ with leaf set $X=\{\ell_1,\ell_2,\ldots,\ell_n\}$. The construction of $\cU$ is based on the bipartite incidence graph which has an internal {\it variable vertex} $x_i$ for each variable, an internal {\it clause vertex $c_j$} for each clause, and an edge connecting $x_i$ with $c_j$ if and only if $x_i$ appears in $c_j$. Additionally, a leaf $\ell_i$ is attached to each $x_i$ and a vertex $r$ is introduced that is adjacent to each $x_i$. Lastly, to guarantee that $r$ has degree at most 5, additional edges and vertices are added. Importantly, the introduction of these additional edges and vertices does not change the edges that connect a variable vertex with a clause vertex. To keep the upcoming discussion as simple as possible, we omit this last step in the construction from the discussion and instead view $r$ as a high-degree vertex. As an example, consider the following yes-instance of \textsc{Positive Not-All-Equal $(2,3)$-SAT}, which is also detailed in~\cite{bulteau23}.
\begin{equation}\label{example:bulteau}
(x_1\vee  x_2\vee  x_3)\wedge (x_2\vee x_4)\wedge (x_1\vee x_4)\wedge (x_1\vee x_3)\wedge (x_2\vee x_3\vee x_4).
\end{equation}
For this instance, the construction results in the unrooted pseudo network that is shown in Figure~\ref{fig:counterexample}(a). 

\begin{figure}[t]
\includegraphics[width=\textwidth]{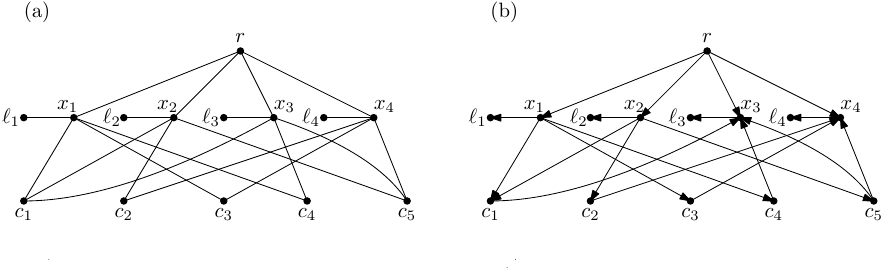}
\caption{(a) The construction of an unrooted pseudo network $\cU$ for the \textsc{Positive Not-All-Equal $(2,3)$-SAT} instance (1), and (b) an orientation $\cO$ of $\cU$ that is rooted at $r$.~Observe that $\cO$ is funneled but not tree-child.}
\label{fig:counterexample}
\end{figure} 

Now, let $\cU$ be an unrooted pseudo network obtained from the above construction for some instance $\cI$ of \textsc{Positive Not-All-Equal $(2,3)$-SAT}, and let $\cO$ be an orientation of $\cU$.  Bulteau at al.  essentially thought to have proved that  $\cO$ is a rooted pseudo network on $X$  with root $r$ that is funneled if and only if  $\cO$ is a rooted pseudo network on $X$  with root $r$ that is funneled and tree-child. Suppose that $\cO$ is a rooted pseudo network on $X$  with root $r$ that is funneled.  We next briefly describe some properties of $\cO$. Clearly, there are arcs $(r,x_i)$ and $(x_i,\ell_i)$ for each variable vertex $x_i$ with $i\in\{1,2,\ldots,n\}$. Since $\cO$ is funneled, the edges that join a variable vertex $x_i$ with a clause vertex $c_j$ are either all directed away from $x_i$ or all directed into $x_i$. Moreover, there is no clause vertex $c_j$ that has all arcs directed into it or all arcs directed away from it in $\cO$. Intuitively, $\cO$ can be translated into a truth assignment that nae-satisfies $C$, where arcs directed into a clause vertex correspond to true variables and arcs directed out of a clause vertex correspond to false variables. Hence, each clause contains at least one true variable and at least one false variable. Returning to the \textsc{Positive Not-All-Equal $(2,3)$-SAT} instance~(\ref{example:bulteau}) that is described above,  Figure~\ref{fig:counterexample}(b) shows an orientation that is a rooted pseudo network on four leaves and  with root $r$ that is funneled. However, this orientation is not tree-child because, for example, $c_2$ has no child that is a tree vertex or a leaf. In general, let $u$ be a child of $c_j$ in $\cO$. Note that $u=x_i$ for some variable vertex $x_i$. Since $x_i$ is also a child of $r$, it follows that the in-degree of $x_i$ is at least 2. Hence, $x_i$ is a reticulation and, because the same argument applies to any other child of $c_j$, it follows that $c_j$ has no child that is a tree vertex or a leaf. The erroneous statement in~\cite[p.\ 10]{bulteau23} is the following:
\begin{quote}
{\it ``Since each clause contains at least one true variable, each clause vertex has at least one child that is a tree node.''}
\end{quote}  
The correct conclusion would be that each clause vertex $c_j$ has at least one \emph{parent} that is a tree vertex.

\subsection{Funneled tree-child orientations for pseudo networks with maximum degree 5}

Let $\cC$ be the class of rooted pseudo networks that are tree-child. In this subsection, we establish hardness for the following decision problem. 

\begin{center}
\noindent\fbox{\parbox{.95\textwidth}{
\noindent\textsc{Funneled $\mathcal{C}_A$-Orientation ($\leq 5$, pseudo)}\\
\textbf{Instance.} An unrooted pseudo network $\cU$ with maximum degree 5.\\
\textbf{Question.} Does there exist a funneled $\mathcal{C}_A$-orientation of $\cU$?
}}
\end{center}

We start by introducing three gadgets that we call the {\it root gadget, connection gadget}, and {\it caterpillar gadget}. These gadgets are shown on the left-hand 
side of Figure~\ref{fig:root_gadget},~\ref{fig:connection_gadget}, and~\ref{fig:caterpillar_gadget}, respectively. They will play a central role in reducing {\sc Positive Not-All-Equal $(2,3)$-SAT} to {\sc Funneled $\mathcal{C}_A$-Orientation ($\leq 5$, pseudo)}, thereby establishing NP-completeness for the latter. 

The next three lemmas establish  properties of the root, connection, and caterpillar gadget. Throughout the proofs, we use the vertex labels as shown in Figures~\ref{fig:root_gadget},~\ref{fig:connection_gadget}, and~\ref{fig:caterpillar_gadget}, when referring to vertices of these gadgets.

\begin{figure}[t]
\includegraphics[width=.6\textwidth]{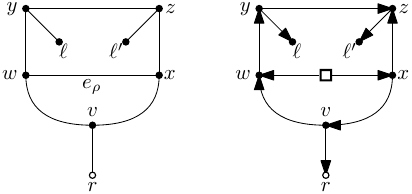}
\caption{Left: Root gadget with a single connector leaf $r$, and two non-connector leaves $\ell$ and $\ell'$. Right: An orientation of the root gadget.}
\label{fig:root_gadget}
\end{figure}

\begin{lemma}\label{lem:root_gadget}
Let $\cG$ be the root gadget, and let $\cC$ be the class of rooted pseudo networks that are tree-child. Then 
\begin{enumerate}[(i)]
\item $\cG$ has a $\mathcal{C}_A$-orientation whose root $\rho$ subdivides $\{w,x\}$, and
\item $\cG$  is $\cC_A$-root-forcing. 

\end{enumerate}
\end{lemma}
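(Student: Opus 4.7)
For part (i), my plan is simply to exhibit the orientation $\cO$ depicted on the right-hand side of Figure~\ref{fig:root_gadget} (with $\rho$ subdividing $\{w,x\}$) and verify directly that $\cO$ is a $\cC_A$-orientation. The required checks reduce to finite per-vertex inspections: that $\cO$ is acyclic with $\rho$ the unique source; that every internal vertex satisfies the in-/out-degree constraints in the definition of a rooted pseudo network; that every reticulation has out-degree exactly $1$ (funneled); and that every non-leaf vertex has a child that is a tree vertex or a leaf in $\{\ell, \ell', r\}$ (tree-child). Because the gadget has only a handful of vertices, each of these conditions is a straightforward local check.

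For part (ii), my plan is a proof by contradiction. Suppose $\cO$ is a $\cC_A$-orientation of $\cG$ whose root $\rho$ subdivides the unique edge incident with $r$, say $\{r, v\}$, where $v$ is the internal neighbour of $r$ in $\cG$. Then $\rho$ is the unique source of $\cO$, so the subdivision produces the two arcs $(\rho, r)$ and $(\rho, v)$. Starting from $v$, I would propagate orientations deterministically using two constraints in tandem: first, whenever a vertex has two or more incoming arcs it is a reticulation and must have out-degree exactly $1$; and second, every non-leaf vertex must have at least one child that is either in $\{\ell, \ell', r\}$ or a tree vertex. Tracing these implications through the internal vertices of the gadget in the order they are reached from $v$, I expect to arrive at a vertex whose remaining incident edges admit no orientation compatible with both constraints, which yields the desired contradiction.

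The main obstacle is keeping the case analysis in (ii) under control. Although the root gadget is small, enumerating all possible directions on its edges naively produces an unwieldy tree of subcases. I would tame this by (a) orienting first only those edges whose direction is forced immediately by the position of $\rho$ and by acyclicity, (b) then propagating deterministically through any vertex at which the funneled or tree-child constraint leaves only one admissible choice, and (c) branching only when a genuine choice remains. A useful auxiliary observation is that the only leaves of $\cG$ available as tree-child witnesses for internal vertices are $\ell$ and $\ell'$; the leaf $r$ sits on the root-side of $\rho$ and so cannot witness tree-childness for any vertex other than $\rho$ itself. This scarcity of leaf witnesses sharply constrains which vertices can be reticulations, and it is precisely this tension that I expect to produce the contradiction at the critical vertex.
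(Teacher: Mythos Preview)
Your plan matches the paper's proof: for (i) it simply exhibits the orientation in Figure~\ref{fig:root_gadget}, and for (ii) it assumes $\rho$ subdivides $\{r,v\}$, notes $(\rho,v)$ is forced, and then branches on whether $v$ has in-degree~1 or in-degree~2, using tree-child and acyclicity to reach a contradiction in each branch. One correction to your reasoning in (ii): you invoke ``a reticulation must have out-degree exactly~$1$'' as if it were an assumed (funneled) constraint, but the lemma only assumes $\cC$ is tree-child, not funneled; the conclusion you need still holds, but only because every internal vertex of this gadget has degree~$3$, so in-degree at least~$2$ forces out-degree~$1$ by arithmetic---justify it that way, and otherwise drive the propagation with the tree-child property (a reticulation's unique child must be a tree vertex or leaf), exactly as the paper does.
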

\begin{proof}
The orientation that is shown on the right-hand side of Figure~\ref{fig:root_gadget} is a $\mathcal{C}_A$-orientation of the partner network of $\cG$ which only differs from $\cG$ by viewing $r$ as a non-connector leaf. Hence, $\cG$ has a $\mathcal{C}_A$-orientation whose root $\rho$ subdivides $\{w,x\}$. This establishes  (i). Now, assume towards a contradiction that $\cG$ does not satisfy (ii). Then there exists a $\cC_A$-orientation $\cO$ of $\cG$ whose root $\rho$ subdivides $\{r,v\}$. Then $(\rho, v)$ is an arc in $\cO$. Furthermore, as $\ell$ and $\ell'$ are non-connector leaves, $(y,\ell)$ and $(z,\ell')$ are arcs in $\cO$. We next distinguish two cases.

First, suppose that $v$ has in-degree 1 and out-degree 2 in $\cO$. Then, $\rho$ is the parent of $v$, and $w$ and $x$ are the children of $v$. Observe that either $(w,x)$ or $(x,w)$ is an arc in $\cO$. By symmetry, we may assume that $(x,w)$ is an arc in $\cO$.  This implies that $w$ has in-degree 2 and out-degree 1. Moreover, since $\cO$ is a $\cC_A$-orientation of  $\cG$, it follows that each of $x$ and $y$ has in-degree 1 and out-degree 2. In turn $(y, z)$ and $(x,z)$ are arcs in $\cO$. Hence, $z$ has in-degree 2 and, thus, both children $w$ and $z$ of $x$ have in-degree 2; thereby contradicting that  $\cO$ is a $\cC_A$-orientation of  $\cG$. Second, suppose that $v$ has in-degree 2 and out-degree 1 in  $\cO$. By symmetry, we may assume that $x$ is the second parent of $v$ and $w$ is the child of $v$. Again, as $\cO$ is a $\cC_A$-orientation of  $\cG$, $w$ has in-degree 1 and out-degree 2. This implies that $(w, x)$ is an arc in $\cO$ and, so, $(x,v), (v, w),$ and $(w,x)$ are the arcs of a directed cycle; thereby contradicting that $\cO$ is acyclic. Combining both cases establishes the lemma.
\end{proof}

\begin{figure} 
\includegraphics[width=.7\textwidth]{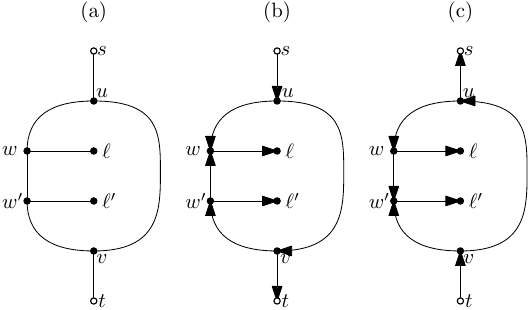}
\caption{Connection gadget (a) with two connector leaves $s$ and $t$, and two non-connector leaves $\ell$ and $\ell'$. Two orientations (b) and (c) of the connection gadget. Both orientations are $\cF$-compatible and strongly $\mathcal{TC}$-compatible.}
\label{fig:connection_gadget}
\end{figure}

The next corollary is a generalisation of Lemma~\ref{lem:root_gadget}(ii).

\begin{corollary}\label{c:root_gadget}
Let $\cC$ be the class of rooted pseudo networks that are tree-child. Let $\cU$ be an unrooted pseudo network that contains a pending subgraph isomorphic to the root gadget (up to the labelling of the two non-connector leaves of that gadget). If $\cU$ has a funneled $\cC_A$-orientation, then the root subdivides an edge of the root gadget that is not $\{r,v\}$.
\end{corollary}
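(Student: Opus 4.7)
The plan is to argue by contradiction, lifting the case analysis of Lemma~\ref{lem:root_gadget}(ii) from the root gadget's partner network to the larger unrooted pseudo network $\cU$. Let $\cG$ denote the pending subgraph of $\cU$ isomorphic to the root gadget, and write $r$ for the vertex of $\cU$ obtained by identifying the connector leaf of $\cG$ with the designated vertex during the pending-subgraph construction. Suppose for contradiction that some funneled $\cC_A$-orientation $\cN$ of $\cU$ subdivides the edge $\{r,v\}$ with a new root $\rho$. Since $\rho$ has in-degree $0$, the arc $(\rho,v)$ is forced in $\cN$.

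The central observation is that the case analysis in Lemma~\ref{lem:root_gadget}(ii) never references any edge incident to $r$: every arc that appears in the proof has both endpoints in $\{\rho,v,w,x,y,z,\ell,\ell'\}$. This matters because in $\cU$ each of $v,w,x,y,z,\ell,\ell'$ has exactly the same set of neighbours as in $\cG$; only $r$ could have acquired further neighbours through the identification. In particular, $\ell$ and $\ell'$ remain labelled leaves of $\cU$, so $(y,\ell)$ and $(z,\ell')$ are forced arcs of $\cN$, and $v$ still has degree $3$ with neighbours $\rho,w,x$ after the subdivision.

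I would then rerun the two cases of Lemma~\ref{lem:root_gadget}(ii) verbatim inside $\cN$. In the first case, where $v$ has in-degree $1$ and out-degree $2$, an orientation of $\{w,x\}$ (WLOG $(x,w)$) together with the funneled restriction at $w$ and the tree-child restriction at $x$ forces $x$ and $y$ to have out-degree $2$; this makes $(y,z)$ and $(x,z)$ arcs of $\cN$, so both children $w$ and $z$ of $x$ are reticulations, contradicting that $\cN$ is tree-child at $x$. In the second case, where $v$ has in-degree $2$ and out-degree $1$, the funneled restriction at $w$ forces $(w,x)$, producing the directed cycle $x\to v\to w\to x$ and contradicting acyclicity.

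The only point specific to the corollary (as opposed to the lemma) is verifying that the identification at $r$ cannot undermine either argument, and I expect this to be the one mildly delicate step. It will be resolved by noting that the identification does not alter the neighbours or the in-/out-degree constraints at any vertex of $\{v,w,x,y,z,\ell,\ell'\}$, and that both the tree-child and acyclicity properties of $\cN$ restrict directly to the required conditions on the arcs internal to $\cG$. Hence no funneled $\cC_A$-orientation of $\cU$ subdivides $\{r,v\}$, as required.
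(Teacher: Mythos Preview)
Your argument correctly handles the case where $\rho$ subdivides $\{r,v\}$, and the observation that the case analysis of Lemma~\ref{lem:root_gadget}(ii) never touches any edge incident with $r$ is exactly the right point. However, you have not proved the full statement. The conclusion of the corollary is that $\rho$ subdivides an edge \emph{of the root gadget} that is not $\{r,v\}$; the negation you must derive a contradiction from is therefore ``$\rho$ subdivides $\{r,v\}$ \emph{or} $\rho$ subdivides an edge of $\cU$ that does not belong to the root gadget.'' You only treat the first disjunct.

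The missing case is short but not vacuous: if $\rho$ subdivides an edge outside the gadget, you must still reach the labelled leaf $\ell$ by a directed path, and since the gadget is pending at $r$, every such path enters the gadget through the edge $\{r,v\}$, forcing the arc $(r,v)$. From there your case analysis goes through verbatim with $r$ in the role of $\rho$. This is precisely how the paper argues, so once you add this one sentence your proof coincides with the paper's.
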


\begin{proof}
Observe that the connector leaf $r$  is either an internal vertex or a leaf of $\cU$. Assume that there exists a funneled $\cC_A$-orientation $\cO$ of $\cU$ whose root $\rho$ subdivides $\{r,v\}$ or an edge that is not an edge of the root gadget. Since there exists a directed path from $\rho$ to leaf $\ell$ of the root gadget, it follows that $(r,v)$ is an arc of $\cO$. Now the proof can be established as the proof of Lemma~\ref{lem:root_gadget}(ii).
\end{proof}

\begin{lemma}\label{lem:connection_gadget}
Let $\cG$ be the connection gadget. Then $\cG$ satisfies the following properties:
\begin{enumerate}[(i)]
\item There exists an orientation of $\cG$ that is $\cF$-compatible and strongly $\mathcal{TC}$-compatible in which each of $u$ and $v$ has in-degree 1 and out-degree 2, and $(s, u)$ and $(v, t)$ are arcs.  
\item There exists an orientation of $\cG$ that is $\cF$-compatible and strongly $\mathcal{TC}$-compatible in which each of $u$ and $v$ has in-degree 1 and out-degree 2, and $(u, s)$ and $(t, v)$ are arcs. 
\item Each $\mathcal{TC}$-compatible orientation of $\cG$ has either arcs $(s, u)$ and $(v, t)$ or arcs $(u, s)$ and $(t, v)$.   
\end{enumerate}
\end{lemma}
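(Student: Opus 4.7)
The proof splits into the constructive parts (i) and (ii) and the structural part (iii). For (i) and (ii), the plan is simply to verify the two orientations displayed in Figure~\ref{fig:connection_gadget}(b) and (c). Concretely, I would list the in- and out-degree of every internal vertex under each orientation and check, by inspection, that every vertex of in-degree at least $2$ has out-degree exactly $1$ (giving $\cF$-compatibility); that every non-leaf vertex has a child that is either a labelled leaf or a vertex of in-degree $1$ and out-degree at least $1$ (giving strong $\mathcal{TC}$-compatibility); that the orientation is acyclic; and that $u$ and $v$ each have in-degree $1$ and out-degree $2$, with $\{s,u\}$ and $\{t,v\}$ oriented as claimed. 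Since the two orientations are given explicitly in the figure, this part is a finite check with no real argument.

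For part (iii), the approach is a case analysis on the directions of the two ``boundary'' edges $\{s,u\}$ and $\{t,v\}$, which admit four combinations. Two of them are the configurations appearing in the statement and are already realised by the orientations from (i) and (ii). The remaining work is therefore to rule out the two ``mixed'' configurations $\{(s,u),(t,v)\}$ (both connector leaves oriented into the gadget) and $\{(u,s),(v,t)\}$ (both connector leaves oriented out of the gadget). The plan is to propagate the constraints imposed by $\mathcal{TC}$-compatibility and acyclicity from the boundary into the interior: once the direction of $\{s,u\}$ is fixed, the tree-child requirement at $u$ forces at least one of the internal edges incident with $u$ to orient in a specific way so that $u$ retains a leaf or tree-vertex child; the same analysis applies at $v$. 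Propagating these forced orientations along the internal chain of the gadget, I expect to reach either an internal vertex whose only remaining children are reticulations, contradicting $\mathcal{TC}$-compatibility, or a closed directed cycle, contradicting acyclicity.

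The main obstacle will be carrying out this interior propagation cleanly without a proliferation of subcases. The plan is to exploit the $s\leftrightarrow t$ (together with $\ell\leftrightarrow\ell'$) symmetry of the gadget so that the two mixed configurations each reduce to a single argument; in fact, the two mixed configurations are related by reversing every arc outside the labelled leaves, so a reflection reduces the whole analysis to one forbidden configuration. A second, more subtle point is that $\mathcal{TC}$-compatibility (as opposed to strong $\mathcal{TC}$-compatibility) allows a connector leaf to serve as a tree-child witness, so one must explicitly argue that in each mixed configuration neither $s$ nor $t$ can be reused to repair a tree-child failure deeper inside the gadget. I would dispose of this first, by noting that in each mixed configuration $s$ and $t$ are both either incoming to the gadget or both outgoing, so neither is available as a child of any internal vertex other than the one it is directly incident with. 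With this observation in hand, the forced propagation becomes linear and the two mixed configurations are ruled out at once, completing (iii).
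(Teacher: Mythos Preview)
Your plan is essentially the paper's proof: parts (i) and (ii) are dispatched by pointing to the two explicit orientations in Figure~\ref{fig:connection_gadget}(b),(c), and part (iii) is a case analysis on the two forbidden boundary configurations, propagating constraints inward until a directed cycle on $u,v,w',w$ appears.

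One small correction to your outline: the arc-reversal symmetry you invoke to collapse the two forbidden cases into one does not actually work, because $\mathcal{TC}$-compatibility is not invariant under reversing all arcs (it constrains children, not parents). The paper accordingly treats the two cases with different propagation arguments: when both connector edges point \emph{out} of the gadget, it is the ``in-degree at least~$1$'' requirement on internal vertices that drives the propagation (your tree-child argument at $u$ would stall here, since $s$ already serves as $u$'s witness); when both point \emph{in}, it is the ``out-degree at least~$1$'' requirement together with the tree-child condition on the reticulation $v$ that forces the chain. Both cases terminate in the same $4$-cycle, so your expected contradiction is a cycle in each case rather than a tree-child failure. This is a minor execution detail; your overall approach is correct and matches the paper.
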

\begin{proof}
Since $\cG$ has two connector leaves, recall that, by definition, an orientation of $\cG$ is obtained by assigning a direction to each edge of $\cG$. The orientations of $\cG$ that are shown in Figure~\ref{fig:connection_gadget}(b) and (c) establish (i) and (ii), respectively. Next we show that $\cG$ also satisfies (iii). Let $\cO$ be a $\mathcal{TC}$-compatible orientation of $\cG$. Due to (i), such an orientation exists. Furthermore, by definition of $\mathcal{TC}$-compatibility, $(w,\ell)$ and $(w',\ell')$ are arcs in $\cO$.

First, suppose that $(u, s)$ and $(v, t)$ are arcs in $\cO$. By symmetry, we may assume that $(u,v)$ is an arc in $\cO$. In turn, this implies the arc $(w,u)$ exists because, otherwise, $u$ has in-degree 0. Using the same argument again, $(w',w)$ and $(v,w')$ are also  arcs in $\cO$. Now $(u,v),(v,w'),(w',w)$ and $(w,u)$ are the arcs of a directed cycle in $\cO$; thereby contradicting that $\cO$ is acyclic.  Second, suppose that $(s, u)$ and $(t, v)$ are arcs in $\cO$. By symmetry, we may again assume that $(u,v)$ is an arc in $\cO$. Then, $(v,w')$ is an arc in $\cO$ because, otherwise $v$ has out-degree 0. Since $\cO$ is $\mathcal{TC}$-compatible and $v$ has in-degree 2 and out-degree 1, $w'$ has in-degree 1 and out-degree 2. Hence $(w',w)$ is an arc in $\cO$. Lastly, $(w,u)$ is an arc in $\cO$ because, otherwise, $u$ has no child that is in $L\cup U$ or has in-degree 1 and out-degree at least 1. Now $(u,v),(v,w'),(w',w)$ and $(w,u)$ are again the arcs of a directed cycle in $\cO$; another contradiction. By combining both cases, we deduce that, each $\mathcal{TC}$-compatible orientation of $\cG$ has either arcs  $(s, u)$ and $(v, t)$, or arcs $(u, s)$ and $(t, v)$. This establishes (iii) and, therefore, the lemma.  
\end{proof}

While the root and connection gadgets are new to this paper, the caterpillar gadget is based on what Bulteau et al.~\cite{bulteau23} refer to as {\it Rule 1}. This rule is a graph operation that reduces the degree of an unrooted phylogenetic pseudo network $\cU$ while preserving the existence of certain orientations of $\cU$. Essentially, Rule 1 is based on the observation that a vertex $v$ with two adjacent leaves and an arc that is directed into $v$ forces directions on all other edges incident with $v$ in every funneled orientation. More precisely, we have the following:

\begin{observation}\label{obs:rule1}
Let $\cO$ be an $\cF$-compatible orientation of a connector network $\cG_k$ with $k\geq 0$. Furthermore, let $v$ be a vertex of $\cO$ that is incident with $d$ arcs such that one arc is directed into $v$ and two arcs are directed out of $v$. Then $v$ has in-degree 1 and out-degree $d-1$.
\end{observation}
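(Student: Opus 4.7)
The plan is to deduce the conclusion by a direct contrapositive argument from the definition of $\cF$-compatibility, using only the two numerical hypotheses on the arcs incident with $v$. First, I would record that the assumption ``two arcs are directed out of $v$'' immediately gives that $v$ has out-degree at least $2$ in $\cO$. Similarly, ``one arc is directed into $v$'' gives in-degree at least $1$.

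Next, I would invoke the $\cF$-compatibility of $\cO$ in its contrapositive form: since every vertex of in-degree at least $2$ must have out-degree exactly $1$, a vertex with out-degree at least $2$ must have in-degree at most $1$. Applying this to $v$ and combining with the lower bound in-degree at least $1$ yields in-degree exactly $1$. Because $v$ is incident with exactly $d$ arcs in total and precisely one of them is directed into $v$, the remaining $d-1$ arcs must be directed out, so the out-degree of $v$ equals $d-1$, as required. There is no real obstacle here; the observation is an essentially one-line unfolding of the $\cF$-compatibility condition, and its value lies not in the argument but in packaging the fact in a form convenient for repeated use when analysing the caterpillar gadget in the sequel.
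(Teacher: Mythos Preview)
Your argument is correct: the contrapositive of $\cF$-compatibility (out-degree $\geq 2$ forces in-degree $\leq 1$) together with the hypothesised incoming arc pins the in-degree at exactly $1$, and the total of $d$ incident arcs then forces out-degree $d-1$. The paper states this as an observation without proof, so there is nothing further to compare; your unpacking is precisely the intended one-line justification.
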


Before establishing properties of the caterpillar gadget in the next lemma, we note that the non-connector leaf $\ell$ of this gadget is not necessary to show that the lemma holds. Its only purpose is to turn $p_1$ into a degree-5 vertex.

\begin{figure}
\includegraphics[width=\textwidth]{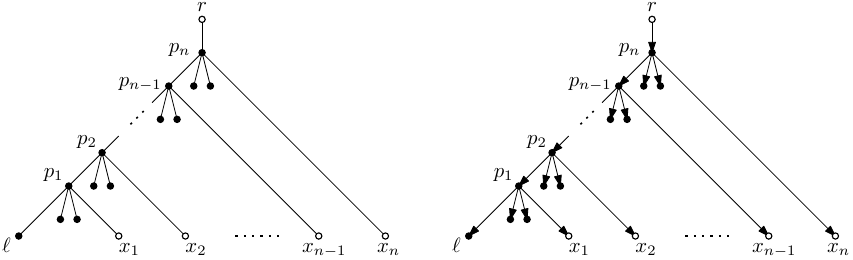}
\caption{Left: Caterpillar gadget with $n+1$ connector leaves, and $1+2n$ non-connector leaves. Right: An $\cF$-compatible and strongly $\mathcal{TC}$-compatible orientation of the caterpillar gadget.}
\label{fig:caterpillar_gadget}
\end{figure}

\begin{lemma}\label{lem:caterpillar_gadget}
For $n\geq 1$, let $\cG$ be the caterpillar gadget with degree-5 vertices $p_1,p_2,\ldots,p_n$. Then $\cG$ has a unique $\mathcal{F}$-compatible orientation $\mathcal{O}$ with arc $(r, p_n)$. Furthermore, \begin{enumerate}[(i)]
\item $\mathcal{O}$ is strongly $\mathcal{TC}$-compatible and 
\item for each $1 \leq i \leq n$, $(p_i, x_i)$ is an arc in $\cO$. 
\end{enumerate}
\end{lemma}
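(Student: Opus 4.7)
The plan is to propagate the forced direction of the given arc $(r, p_n)$ along the spine $p_n, p_{n-1}, \ldots, p_1$ of the caterpillar by repeatedly invoking Observation~\ref{obs:rule1}. First I would note that in any orientation $\mathcal{O}$ of $\cG$ (in the sense of the $k \geq 2$ case), every non-connector leaf has out-degree~$0$ and every internal vertex that is not a connector leaf has in-degree and out-degree at least~$1$; in particular, each edge joining $p_i$ to a non-connector-leaf neighbour is directed out of $p_i$. From the gadget's structure, each $p_i$ with $i \geq 2$ has two non-connector-leaf neighbours, while $p_1$ has three (the extra one being $\ell$).

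The core of the argument is a downward induction on $i$ from $n$ to $1$. The base case $i = n$ uses the hypothesis that $(r, p_n)$ is an arc together with the two forced outgoing arcs from $p_n$ to its non-connector leaves: this exhibits at $p_n$ one incoming and (at least) two outgoing arcs, so Observation~\ref{obs:rule1} forces every remaining edge at $p_n$ to be directed away from $p_n$. In particular, $(p_n, x_n)$ is an arc and $(p_n, p_{n-1})$ is an arc, the latter providing the incoming arc needed at $p_{n-1}$. The inductive step is essentially identical: the arc $(p_{i+1}, p_i)$ is incoming at $p_i$, the two leaf edges are outgoing, so Observation~\ref{obs:rule1} forces every other edge at $p_i$ outward---in particular $(p_i, x_i)$ and, when $i \geq 2$, $(p_i, p_{i-1})$. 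Because every edge's direction is thereby forced by the hypothesis $(r, p_n)$, the orientation $\mathcal{O}$ is unique, giving both the uniqueness claim and property~(ii).

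For (i), strong $\mathcal{TC}$-compatibility requires every vertex outside $L \cup U$ to have a child in $L$ or a child with in-degree~$1$ and out-degree $\geq 1$. Each $p_i$ has its non-connector leaf $x_i \in L$ as a child by (ii), so the condition holds at every spine vertex; the only other internal vertices of the gadget are the connector leaves in $U$, which are excluded from the requirement. I do not anticipate a genuine obstacle: the argument is simply an unwinding of Rule~1 along the spine, and the only care needed is to verify at each step that Observation~\ref{obs:rule1} applies, which reduces to checking that each $p_i$ has at least two non-connector-leaf neighbours. The one mild subtlety is handling $p_1$, whose extra non-connector leaf $\ell$ only strengthens the hypothesis of Observation~\ref{obs:rule1}, and so does not disrupt the induction.
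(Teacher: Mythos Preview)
Your approach is essentially the same as the paper's: propagate Observation~\ref{obs:rule1} along the spine from $p_n$ down to $p_1$, forcing the direction of every edge at each step. There is one slip in your argument for (i): you write that $x_i \in L$, but the $x_i$ are \emph{connector} leaves and hence lie in $U$, not $L$, and strong $\mathcal{TC}$-compatibility does not accept a child in $U$ as witness. The fix is immediate---each $p_i$ is adjacent to two genuine non-connector leaves (the paper calls them $\ell_i$ and $\ell_i'$), which do lie in $L$; alternatively, as the paper notes, no vertex of $\mathcal{O}$ has in-degree at least~$2$, so every non-leaf child is already a tree vertex.
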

\begin{proof}
Suppose that  $(r, p_n)$ is an arc in an $\mathcal{F}$-compatible orientation $\cO$ of $\cG$. We first show that $\cO$ is unique. For each $i\in\{1,2,\ldots,n\}$, let $\ell_i$ and $\ell_i'$ be the two non-connector leaves adjacent to $p_i$ (these leaf labels are omitted in Figure~\ref{fig:caterpillar_gadget}). Since  $(r, p_n)$, $(p_n,\ell_n)$, and $(p_n,\ell_n')$ are arcs in $\cO$ and $\cO$ is $\mathcal{F}$-compatible, it follows from Observation~\ref{obs:rule1} that $(p_n,x_n)$ and $(p_n,p_{n-1})$ are arcs in $\cO$. Now, since $(p_n,p_{n-1})$, $(p_{n-1},\ell_{n-1})$, and $(p_{n-1},\ell_{n-1}')$ are arcs in $\cO$, it again follows by Observation~\ref{obs:rule1} that $(p_{n-1},x_{n-1})$ and $(p_{n-1},p_{n-2})$ are also arcs in $\cO$. Continuing in this way and noting that $p_1$ has, in addition to $\ell_1$ and $\ell_1'$, a third non-connector leaf $\ell$, it is easily checked that there  exists exactly one $\mathcal{F}$-compatible orientation for $\cG$ with arc $(r, p_n)$. Moreover, this orientation, which is illustrated on the right-hand side of Figure~\ref{fig:caterpillar_gadget},  also satisfies (ii) and, as no vertex in $\cO$ has in-degree at least 2, $\cO$ satisfies (i).
\end{proof}

The proof of the following theorem uses a construction that is based on ideas presented in Bulteau et al.~\cite{bulteau23} and summarised in Section~\ref{sec:bulteau}. However, we use a different root gadget and introduce a connection gadget that enables the tree-child property for orientations corresponding to nae-satisfying truth assignments.

\begin{theorem}\label{thm:hardness_at_most_5}

Let $\cC$ be the class of  rooted pseudo networks that are tree-child. Then \textsc{Funneled $\mathcal{C}_A$-Orientation ($\leq 5$, pseudo)} is NP-complete. 
\end{theorem}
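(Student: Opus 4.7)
My plan is to reduce from \textsc{Positive Not-All-Equal $(2,3)$-SAT}, which remains NP-hard even under the restriction (via standard padding arguments) that each variable occurs in at most three clauses. NP-membership of \textsc{Funneled $\mathcal{C}_A$-Orientation ($\leq 5$, pseudo)} is immediate, since for any candidate orientation one can verify acyclicity, the funneled property, and the tree-child property in polynomial time.

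For hardness, given an instance $(V,C)$, I will construct an unrooted pseudo network $\cU$ of maximum degree $5$ as follows. For each variable $x_i$, introduce a vertex $v_i$ with a pendant leaf $\ell_i$; for each clause $c_j$, introduce a vertex $w_j$. For every incidence of $x_i$ in $c_j$, take a fresh copy of the connection gadget and identify its two connector leaves with $v_i$ and $w_j$, respectively. Finally, attach a single caterpillar gadget with connector leaves $r, x_1, \dots, x_n$ together with a single root gadget, by identifying $r$ with the connector leaf of the root gadget and each caterpillar connector leaf $x_i$ with the variable vertex $v_i$. Under the $\leq 3$-occurrence restriction, $v_i$ has degree at most $5$ (one caterpillar edge, $\ell_i$, and at most three connection-gadget edges), and every other vertex is visibly of degree at most $5$ as well.

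The core argument will then establish that $\cU$ admits a funneled $\mathcal{C}_A$-orientation if and only if the input is nae-satisfiable. In the forward direction, I will apply Corollary~\ref{c:root_gadget} to place the root inside the root gadget, which in turn forces the arc $(r, p_n)$ at the caterpillar and hence, by Lemma~\ref{lem:caterpillar_gadget}, uniquely orients the caterpillar with an arc directed into each $v_i$. Lemma~\ref{lem:connection_gadget}(iii) then forces each connection gadget into one of two canonical orientations. Because the orientation is funneled and $v_i$ already has an incoming caterpillar arc, all connection gadgets incident to $v_i$ must agree in direction; I set $\beta(x_i) = T$ precisely when they all point away from $v_i$. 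The in-degree condition at $w_j$ yields at least one True variable per clause, and the tree-child condition at $w_j$, together with Lemma~\ref{lem:connection_gadget}(ii), yields at least one False variable per clause. Conversely, from a nae-satisfying assignment $\beta$ I will construct an orientation by invoking Lemma~\ref{lem:root_gadget}(i) at the root gadget, Lemma~\ref{lem:caterpillar_gadget} at the caterpillar, and Lemma~\ref{lem:connection_gadget}(i) or (ii) at each connection gadget according to $\beta(x_i)$; tree-child-ness and funneled-ness then hold gadget-wise, with the nae-condition supplying the one nontrivial missing piece, namely a tree-vertex child at each clause vertex.

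The main obstacle, and the precise point at which the original Bulteau et al.\ argument collapses, is the tree-child property at clause vertices. The connection gadget is introduced exactly to resolve this: whenever a variable in a clause is False, Lemma~\ref{lem:connection_gadget}(ii) supplies an internal tree-vertex child of $w_j$ through the corresponding gadget, and the nae-condition is the exact combinatorial guarantee that such a False contribution exists in every clause.
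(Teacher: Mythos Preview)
Your proposal is correct and follows essentially the same approach as the paper: the same reduction from \textsc{Positive Not-All-Equal $(2,3)$-SAT} with bounded variable occurrence, the same three gadgets (root, caterpillar, connection) assembled in the same way, and the same overall correctness argument via Corollary~\ref{c:root_gadget} and Lemmas~\ref{lem:root_gadget}--\ref{lem:caterpillar_gadget}.

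Two small points where the paper differs from your outline. First, the paper uses the variant where each variable appears \emph{exactly} three times (citing Dehghan et al.), which makes the degree bound at $v_i$ exactly~$5$ and avoids the need to justify any padding. Second, in the forward direction your appeal to the tree-child property at $w_j$ (and to Lemma~\ref{lem:connection_gadget}(ii), which is an existence statement and not directly applicable here) is more than is needed: the paper simply observes that if all variables in $c_j$ are True then $c_j$ has out-degree~$0$, and if all are False then $c_j$ has in-degree~$0$, either of which contradicts $\cO$ being a rooted pseudo network. The tree-child property at clause vertices is the crucial point only in the \emph{backward} direction, exactly as you identify in your final paragraph.
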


\begin{proof}
Let $\cU$ be an unrooted pseudo network, and let $\cO$ be an orientation of $\cU$ following Variant $A$. Since it can be checked in polynomial time if $\cO$ is a $\mathcal{C}_A$-orientation of $\cU$, it follows that \textsc{Funneled $\mathcal{C}_A$-Orientation ($\leq 5$, pseudo)} is in NP. 
We next establish NP-hardness using a polynomial-time reduction from the variant of \textsc{Positive Not-All-Equal (2,3)-SAT}, where each variable appears exactly three times (see Dehghan et al.~\cite{dehghan15} for a proof that the problem remains NP-complete under this restriction). 

Let $\mathcal{I} = (V, C)$ be an instance of \textsc{Positive Not-All-Equal (2,3)-SAT} such that each variable appears in exactly three clauses. Let $m_2$ (resp. $m_3$) be the number of clauses in $C$ that contain exactly two (resp. three) distinct variables. Let $\cG_{n+1}^c$ be the caterpillar gadget with connector leaves $r,x_1, x_2,\ldots, x_n$,  let $\cG_1^r$ be the root gadget with connector leaf $r$, and let $\cG_2^1,\cG_2^2,\ldots,\cG_2^{2m_2+3m_3}$ be copies of the connection gadget each with connector leaves $s$ and $t$. We next construct an unrooted  pseudo network in the following way.

\begin{figure}
\includegraphics[width=.8\textwidth]{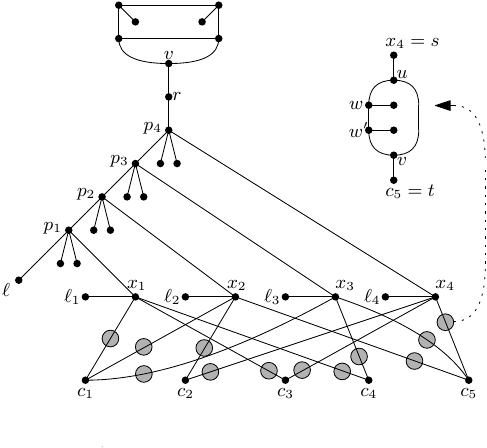}
\caption{The unrooted pseudo network that is constructed from the Boolean formula~(\ref{example:bulteau}) of \textsc{Positive Not-All-Equal (2,3)-SAT}. Details are given in the proof of Theorem~\ref{thm:hardness_at_most_5}. Connection gadgets are shown as grey shaded circles.}
\label{fig:construction}
\end{figure} 

\begin{enumerate}
\item Identify the connector leaf $r$ of $\cG_1^r$ with the connector leaf $r$ of $\cG_{n+1}^c$. We will continue to refer to the resulting degree-2 vertex as $r$. 
\item Attach a new leaf $\ell_i$ to each connector leaf $x_i$ with $i\in\{1,2,\ldots,n\}$ via a new edge. 
\item Let $c_1,c_2,\ldots,c_m$ be new vertices. Furthermore, let $i\in\{1,2,\ldots,n\}$, and let $j\in\{1,2,\ldots,m\}$. For each $x_i$ and $c_j$ such that $x_i$ is a variable of $c_j$, add one of the $2m_2+3m_3$ connection gadgets by identifying the connector leaf $s$ with $x_i$ and identifying the connector leaf $t$ with $c_j$.
\end{enumerate}

Let $\cU$ be the resulting unrooted pseudo network. We may assume for the remainder of the proof that all  leaves of $\cU$ have a distinct leaf label and that the leaf set of $\cU$ is $X$. Note that  $|X|=3+3n+2(2m_2+3m_3)$. By construction and since each variable appears exactly three times in $C$, the maximum degree of a vertex in $\cU$ is~5. Figure~\ref{fig:construction} shows an example of this construction for the Boolean formula given in~(\ref{example:bulteau}).

\begin{sublemma}\label{claim:hardness_at_most_5-sub}
$\cI$ is a yes-instance if and only if $\cU$ has a funneled $\mathcal{C}_A$-orientation.
\end{sublemma}
\begin{proof}
First, suppose that $\cI$ is a yes-instance. Let $\beta\colon V \rightarrow \{T, F\}$ be a truth assignment that nae-satisfies each clause in~$C$.
We construct a funneled $\cC_A$-orientation $\cO$ of $\cU$. Instead of referring to edges of $\cU$, we refer to edges of the root, connection, and caterpillar gadgets as shown in Figures~\ref{fig:root_gadget}--\ref{fig:caterpillar_gadget} when assigning directions because each edge in $\cU$, except for the edges incident with $\ell_i$ with $i\in\{1,2,\ldots,n\}$, corresponds to a unique edge in one of the gadgets that make up $\cU$. We start by subdividing the edge $\{w,x\}$ of $\cG_1^r$ with a new root vertex $\rho$ and direct the edges of the resulting root gadget as shown on the right-hand side of Figure~\ref{fig:root_gadget}. In particular, we have the arc $(v, r)$ in $\cO$. 
Turning to the edges of  $\cG_{n+1}^c$, we direct the edges of this gadget as shown on the right-hand side of Figure~\ref{fig:caterpillar_gadget}. In particular, we have the arc $(r,p_n)$ in $\cO$. Now, for each $i \in \{1,2,\ldots,n\}$, let $u_i, u_i'$, and $u_i''$ be the three neighbours of $x_i$ in $\cU$ that are vertices of three distinct connection gadgets. We direct the edge  $\{x_i,\ell_i\}$ into $\ell_i$ and, depending on $\beta(x_i)$, we direct  the three edges $\{x_i,u_i\}$, $\{x_i,u_i'\}$, and $\{x_i,u_i''\}$ in one of the following two ways:
\begin{enumerate}[(i)]
\item $(x_i, u_i), (x_i,u_i'), (x_i,u_i'')$ if $\beta(x_i) = T$, and
\item $(u_i, x_i), (u_i', x_i), (u_i'', x_i)$ if $\beta(x_i) = F$. 
\end{enumerate}
At this point, each connection gadget in $\cU$ has exactly one edge that is already assigned a direction and this edge is $e=\{s,u\}$ (where $u$ is the vertex as shown in Figure~\ref{fig:connection_gadget}). If $e$ is directed into $u$ in $\cO$, then direct the remaining edges of the connection gadget as shown in Figure~\ref{fig:connection_gadget}(b) and, otherwise, as shown in  Figure~\ref{fig:connection_gadget}(c). 
This completes the assignment of directions to the edges of $\cU$ and, therefore, the construction of $\cO$. Note that the set of vertices with out-degree 0 is $X$ and each such vertex has in-degree 1, and the unique vertex with in-degree 0 is $\rho$. We next show that $\cO$ is indeed a funneled $\cC_A$-orientation of $\cU$. In preparation for the upcoming arguments, let $$S=\{r, x_1,x_2,\ldots,x_n,c_1,c_2,\ldots,c_m\}.$$
Intuitively, $S$ contains each vertex in $\cO$ that is a connector leaf in one of the gadgets that are used in the construction of $\cU$. Furthermore, observe that each vertex $x_i$ in $\cO$ with $i\in\{1,2,\ldots,n\}$ has either in-degree 1 and out-degree 4, or in-degree 4 and out-degree 1.

We start by establishing that $\cO$ is acyclic. Since the orientations that are shown in Figures~\ref{fig:root_gadget}--\ref{fig:caterpillar_gadget} are acyclic, it is sufficient to show that no vertex in $S$ lies on a directed cycle. Let $v$ be the vertex of $\cG_1^r$ that is a neighbour of $r$ in $\cO$. Since there is no directed path from $p_n$ to $v$ in $\cO$, it follows that $r$ does not lie on a directed cycle.
Assume towards a contradiction that $x_i$ lies on a directed cycle for some $i\in\{1,2,\ldots,n\}$. Then, by the observation at the end of the last paragraph, $x_i$ has in-degree 1 and out-degree 4 and there is a directed path from $x_i$ to $p_i$. Hence, there is an arc $(x_{i'}, p_{i'})$ for some $i' \in \{1,2,\ldots,n\}$; a contradiction. Finally, if $c_j$ lies on a directed cycle for some $j\in\{1,2,\ldots,m\}$, then some $x_i$ lies on a directed cycle for some $i\in\{1,2,\ldots,n\}$; again a contradiction. We conclude that $\cO$ is acyclic and, thus, $\cO$ is a rooted pseudo network on $X$ with root $\rho$.

We next argue that $\cO$ is funneled, i.e., each vertex in $\cO$ with in-degree at least 2 has out-degree 1. It follows from the orientations shown in Figures~\ref{fig:root_gadget}--\ref{fig:caterpillar_gadget} that each vertex of $\cO$ that is not in $S$ satisfies this property. Turning to the vertices in $S$, $r$ has in-degree~1 and out-degree 1. Moreover, for each $i \in \{1,2,\ldots,n\}$, $x_i$ has in-degree 1 if $\beta(x_i) = T$ and out-degree 1 if $\beta(x_i) = F$. Finally, recall that each clause in $C$ has either two or three distinct variables. Since $\beta$  nae-satisfies $C$, it follows that $c_j$ has either in-degree 1 and out-degree is 2, or in-degree 2 and out-degree 1 for each $c_j\in C$. Hence, $\cO$ is funneled. 

We complete this part of the proof by showing that $\cO$ is tree-child. Since $r$ has in-degree 1 and out-degree 1 in $\cO$, it follows from Lemmas~\ref{lem:root_gadget}(i),~\ref{lem:connection_gadget}(i)--(ii), and \ref{lem:caterpillar_gadget}(i) that each vertex of $\cO$ that is not a leaf and not in $S$ has a child that is a leaf or a tree vertex. 
Furthermore, $r$ has a unique child $p_n$ that is a tree vertex, and each $x_i$ with $i\in\{1,2,\ldots,n\}$ is adjacent to a leaf $\ell_i$. Now consider a vertex $c_j$ in $\cO$ with $j\in\{1,2,\ldots,m\}$. Each child of $c_j$ is a vertex $v$ of a connection gadget. The edges of such a connection gadget are directed as shown in Figure~\ref{fig:connection_gadget}(c). Importantly, $v$ is a tree vertex.
We conclude that $\cO$ is a pseudo network on $X$ that is funneled and tree-child.

Second, let $\cO$ be a funneled $\cC_A$-orientation of $\cU$ with root $\rho$.  
By Corollary~\ref{c:root_gadget}, $\rho$ subdivides an edge of $\cG_1^r$ that is not $\{r,v\}$. Moreover, since $\cO$ is funneled and there is a directed path from $\rho$ to each leaf in $\cO$, $(v,r)$ and $(r,p_n)$ are arcs in $\cO$. By Lemma~\ref{lem:caterpillar_gadget}(ii), it now follows that $(p_i, x_i)$ is an arc in $\cO$ for each $i \in \{1,2,\ldots,n\}$. Hence, each $x_i$ has a parent $p_i$ and a child $\ell_i$ in $\cO$. We now define a truth assignment $\beta \colon V \rightarrow \{T, F\}$ as follows. For each $i \in \{1,2,\ldots,n\}$, we set $\beta(x_i) = T$ if $x_i$ is a tree vertex and $\beta(x_i) = F$ if $x_i$ is a reticulation in $\cO$.

Towards a contradiction, assume that $\beta$ does not nae-satisfy each clause in $C$. We distinguish two cases. First, suppose that there exists a clause $c_j=(x\vee x'\vee x'')$ or $c_j=(x\vee x')$ for some $j\in\{1,2,\ldots,m\}$ whose variables are all set to $T$ under $\beta$. We continue by assuming that $c_j=(x\vee x'\vee x'')$ and note that the same argument applies for when $c_j=(x\vee x')$. As $x, x'$ and $x''$ are tree vertices in $\cO$, each arc that joins a vertex in $\{x, x', x''\}$ with a vertex of a connection gadget is directed towards the latter. Now consider a directed path $P$ in $\cO$ from a vertex in $\{x, x', x''\}$ to $c_j$, and let $e$ be the last arc on $P$.
 Since $\cO$ is a $\cC_A$-orientation, the direction of the arcs restricted to a single connection gadget  in $\cO$ is $\mathcal{TC}$-compatible. Hence, Lemma~\ref{lem:connection_gadget}(iii) implies that $e$ is directed into $c_j$. It now follows that $c_j$ has in-degree 3 and out-degree 0; a contradiction. Second, suppose that there exists a clause  $c_{j'}\in C$ whose variables are all set to $F$ under $\beta$. Then, by an analogous argument, $c_{j'}$ has in-degree 0 and out-degree-3; also a contradiction. Thus, $\beta$ nae-satisfies each clause in $C$, that is, $\mathcal{I}$ is a yes-instance.   
\end{proof}

Noting that $\cU$ can be constructed in polynomial time and has a size that is polynomial in $n$ and $m$, we conclude the proof of the theorem. 
\end{proof}

\subsection{Hardness remains for phylogenetic networks}
Theorem~\ref{thm:hardness_at_most_5} shows that the erroneous result by Bulteau et al.~\cite[Cor.\ 5]{bulteau23} can be corrected by introducing additional gadgets. However, the construction results in unrooted and rooted pseudo networks. In this short subsection, we outline two modifications to the construction that avoid degree-2 vertices, thereby showing that the following decision problem is NP-complete for when $\cC$ is the class of tree-child networks. 

\begin{center}
\noindent\fbox{\parbox{.95\textwidth}{
\noindent\textsc{Funneled $\mathcal{C}_A$-Orientation ($3,5$)}\\
\textbf{Instance.} An unrooted phylogenetic network $\cU$ such that each non-leaf vertex has degree 3 or 5.\\
\textbf{Question.} Does there exist a funneled $\mathcal{C}_A$-orientation of $\cU$?
}}
\end{center}

Let $\cU$ be an unrooted pseudo network on $X$ that is constructed from an instance of  \textsc{Positive Not-All-Equal (2,3)-SAT}  as described in the proof of Theorem~\ref{thm:hardness_at_most_5}. By construction, each non-leaf vertex of $\cU$ has degree 2, 3, or 5. Moreover, the degree-2 vertices in $\cU$ are $r$ and $c_j$ for each $j\in\{1,2,\ldots,m\}$ such that $c_j$ is a clause with exactly two distinct variables. Consider a 2-clause $c_j = (x_i\vee x_{i'})$ with $i,i'\in\{1,2\ldots,n\}$ and $i\ne i'$. Observe that $c_j$ can be viewed as an XOR constraint on $x_i$ and $x_{i'}$, that is, exactly one of the two variables is set to $T$ in a truth assignment that nae-satisfies $c_j$. In $\cU$, this is simulated by the clause vertex $c_j$ being adjacent to two vertices that are part of two distinct connection gadgets, say $\cG_2^1$ and $\cG_2^2$. Without loss of generality, we assume that the vertex $s$ of $\cG_2^1$ has been identified with $x_i$ in the construction of $\cU$.  We refer to the operation of deleting all vertices of $\cG_2^2$ except for $c_j$ and $x_{i'}$, and then identifying $c_j$ with $x_{i'}$ as {\it the 2-clause suppression} of $c_j$. See Figure~\ref{fig:degree-2_gadget} for an example of a $2$-clause suppression. 

\begin{figure}
\includegraphics[width=.95\textwidth]{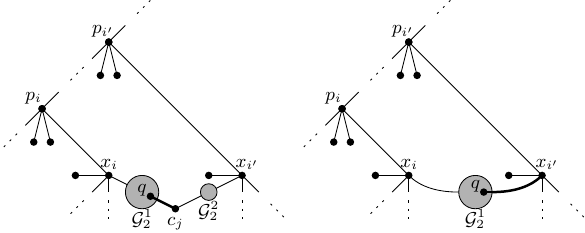}
\caption{Left: Partial construction of an unrooted pseudo network $\cU$ from an instance of \textsc{Positive Not-All-Equal (2,3)-SAT} that contains a clause $c_j=(x_i\vee x_{i'})$ as detailed in the proof of Theorem~\ref{thm:hardness_at_most_5}. Right: Unrooted pseudo network obtained from $\cU$ by the $2$-clause suppression of $c_j$. Connection gadgets are shown as grey shaded circles.}
\label{fig:degree-2_gadget}
\end{figure} 

\begin{lemma}\label{lem:help}
Let $\cC$ be the class of rooted pseudo networks that are tree-child. Let $\cU$ be an unrooted pseudo network that is constructed from an instance of  \textsc{Positive Not-All-Equal (2,3)-SAT} as described in the proof of Theorem~\ref{thm:hardness_at_most_5}. Let $\cU'$ be the unrooted phylogenetic network obtained from $\cU$ by suppressing $r$ and applying the 2-clause suppression to each clause vertex $c_j$ that has degree 2 in $\cU$. Then $\cU$ has a funneled $\cC_A$-orientation if and only if $\cU'$ has a funneled $\cC_A$-orientation.
\end{lemma}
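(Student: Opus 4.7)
The plan is to prove both directions by transforming a funneled $\cC_A$-orientation of one network into a funneled $\cC_A$-orientation of the other, using the gadget properties established in Lemmas~\ref{lem:root_gadget}--\ref{lem:caterpillar_gadget} together with Corollary~\ref{c:root_gadget}.

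For the forward direction, suppose $\cO$ is a funneled $\cC_A$-orientation of $\cU$. Corollary~\ref{c:root_gadget} ensures the root $\rho$ does not subdivide the edge $\{r,v\}$, and combining the funneled property with the need for every leaf to be reachable from $\rho$ forces the arcs $(v,r)$ and $(r,p_n)$ in $\cO$; hence suppressing $r$ contracts these into a single arc $(v,p_n)$ and alters nothing else. For each $2$-clause $c_j=(x_i\vee x_{i'})$, the vertex $c_j$ has degree~$2$ in $\cU$ and is neither a leaf nor the root, so it necessarily has in-degree and out-degree both equal to~$1$ in $\cO$. By Lemma~\ref{lem:connection_gadget}(iii), this forces its two incident connection gadgets into opposite cases, so exactly one of $x_i,x_{i'}$ is a tree vertex and the other a reticulation. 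Deleting the internal vertices of $\cG_2^2$ and identifying $c_j$ with $x_{i'}$ then inherits an arc at the identified vertex (from the $v$ of $\cG_2^1$) whose direction matches the deleted arc between $x_{i'}$ and $\cG_2^2$; the in- and out-degrees at $x_{i'}$ are therefore preserved, the funneled property is preserved, and $\ell_{i'}$ remains a child of $x_{i'}$ so tree-child is preserved. Acyclicity survives because, after deletion, the only out-arcs from $c_j$ and from $x_{i'}$ in the residual graph led either into the now-removed $\cG_2^2$ or to the leaf $\ell_{i'}$, so the identification cannot close a directed path.

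For the reverse direction, let $\cO'$ be a funneled $\cC_A$-orientation of $\cU'$. An argument parallel to the proof of Corollary~\ref{c:root_gadget}, now applied to the pending subgraph of $\cU'$ consisting of the root gadget with $r$ suppressed (attached to the rest of $\cU'$ via the edge $\{v,p_n\}$), shows that this edge must be oriented $(v,p_n)$ in $\cO'$; subdividing it with a new in/out-degree-$1$ vertex $r$ thus reinstates the arcs $(v,r)$ and $(r,p_n)$. For each $2$-clause $c_j=(x_i\vee x_{i'})$, I split the identified vertex back into $c_j$ and $x_{i'}$ so that $c_j$ inherits the edge corresponding to the original $c_j$--$\cG_2^1$ connection, then reinsert $\cG_2^2$ between $c_j$ and $x_{i'}$ using the orientation of Lemma~\ref{lem:connection_gadget}(ii) when the inherited arc at $x_{i'}$ is incoming (so $x_{i'}$ is a reticulation) and the orientation of Lemma~\ref{lem:connection_gadget}(i) otherwise. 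Both choices are $\mathcal{F}$-compatible and strongly $\mathcal{TC}$-compatible, so the resulting orientation of $\cU$ is funneled and tree-child, and acyclicity again follows because $c_j$'s external out-arc points only into the freshly inserted gadget.

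The main obstacle is justifying the coupling between the 2-clause suppression and the nae-satisfaction condition cleanly. The key observation is that, because $c_j$ is a degree-$2$ subdivision vertex in $\cU$, it must have in-degree and out-degree both equal to~$1$, which via Lemma~\ref{lem:connection_gadget}(iii) automatically locks the two connection gadgets into opposite cases and encodes $\beta(x_i)\ne\beta(x_{i'})$; this makes the suppression a local, orientation-preserving operation in both directions, bypassing any need to revisit the global satisfiability argument of Theorem~\ref{thm:hardness_at_most_5}.
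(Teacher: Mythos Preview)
Your approach is essentially the same as the paper's: carry over arc directions on shared edges, handle $r$ by suppression/subdivision of the $(v,r),(r,p_n)$ pair, and for each $2$-clause delete or reinsert $\cG_2^2$ using the orientation of Lemma~\ref{lem:connection_gadget}(i) or (ii) as dictated by whether $x_{i'}$ is a tree vertex or a reticulation.

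There is, however, a genuine (if easily repaired) gap in your acyclicity justifications: they only cover one of the two symmetric cases. In the forward direction, when $x_{i'}$ is a tree vertex and $x_i$ a reticulation, Lemma~\ref{lem:connection_gadget}(iii) forces $\cG_2^1$ into case~(ii), so the surviving arc at $c_j$ in the residual graph is the \emph{outgoing} arc $(c_j,q')$ into $\cG_2^1$, and $x_{i'}$ still has out-arcs into its two other connection gadgets. Hence your claim that ``the only out-arcs from $c_j$ and from $x_{i'}$ in the residual graph led either into the now-removed $\cG_2^2$ or to the leaf $\ell_{i'}$'' is false in this case. The clean fix is to observe that, in either case, after deleting the interior of $\cG_2^2$ the vertex $c_j$ is left with a \emph{single} incident arc (incoming when $x_{i'}$ is a reticulation, outgoing when $x_{i'}$ is a tree vertex), so identifying such a degree-$1$ vertex with $x_{i'}$ cannot close a directed cycle. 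The same asymmetry afflicts your reverse-direction sentence ``$c_j$'s external out-arc points only into the freshly inserted gadget'': when $x_{i'}$ is a tree vertex you use Lemma~\ref{lem:connection_gadget}(i), giving $(v_2,c_j)$ and $(c_j,q')$, so $c_j$'s out-arc goes into $\cG_2^1$, not into the new $\cG_2^2$. Here acyclicity (and tree-child at $c_j$, whose unique child is then $v_1\in\cG_2^1$) must instead be read off from the existing orientation of $\cG_2^1$ in $\cO'$.
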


\begin{proof}
We first note that if an unrooted phylogenetic network has a $\cC_A$-orientation, then this orientation is necessarily a tree-child network (without any vertex of in-degree 1 and out-degree 1).
Furthermore, by construction, observe that, except for $\{v, p_n\}$, each edge in $\cU'$ corresponds to a unique edge in $\cU$. In particular, for each $2$-clause $c_j=(x_i\vee x_{i'})$, the edge $\{q,x_{i'}\}$ in $\cU'$, where $q$ is a vertex of a connection gadget $\cG_2^1$ in $\cU$ corresponds to the edge $\{q,c_j\}$. These two edges are indicated by thick lines in Figure~\ref{fig:degree-2_gadget}.

Now, first, suppose that $\cO$ is a funneled $\cC_A$-orientation of $\cU$. By Corollary~\ref{c:root_gadget}, the root of $\cO$ subdivides an edge $e_\rho$ of the root gadget such that $e_\rho\ne\{r,v\}$. Let $\cO'$ be the orientation obtained from $\cU'$ by subdividing $e_\rho$ in $\cU'$ with a root vertex, directing the edge $\{v, p_n\}$ into $p_n$, and then directing each remaining edge as it is directed in $\cO$. Let $c_j=(x_i\vee x_{i'})$ be a $2$-clause. Then one of $x_i$ and $x_{i'}$, say $x_i$, is a tree vertex and $x_{i'}$ is a reticulation in $\cO$ and $\cO'$. It follows that each $x_i$ with $i\in\{1,2,\ldots,n\}$ is a tree vertex (resp. reticulation) in $\cO$ if and only if $x_i$ is a tree vertex (resp. reticulation) in $\cO'$, and the same applies to $x_{i'}$. Noting that the unique child $p_n$ of $r$ in $\cU$ is a tree vertex, it now follows that, as  $\cO$ is a funneled $\cC_A$-orientation of $\cU$, $\cO'$ is such an orientation of $\cU'$.

Second, suppose that $\cO'$ is a funneled $\cC_A$-orientation of $\cU'$. Let $e_\rho$ be the edge in $\cU'$ that contains the root in $\cO'$. By Corollary~\ref{c:root_gadget}, $e_\rho$ is an edge of the root gadget such that $e_\rho\ne\{v,p_n\}$. Hence, $e_\rho$ is also an edge of $\cU$. We now obtain an orientation $\cO$ for $\cU$ by subdividing $e_\rho$ with $\rho$, directing the two edges that are incident with $\rho$ away from $\rho$ and directing the remaining edges as follows. For each edge $e$ such that $e\ne e_\rho$ and $e$ is in $\cU$ and $\cU'$, $e$ has the same direction in $\cO$ as in  $\cO'$. Furthermore, for each leaf $\ell$ 
that is in $\cU$ but not in $\cU'$, the unique edge incident with $\ell$ is directed towards $\ell$. Since there exists a directed path from $\rho$ to each leaf in $\cO'$, it follows that $\cO'$ contains the arc $(v, p_n)$. For the same reason,  $(v, r)$  and, consequently, $(r, p_n)$ are arcs in $\cO$. Now, the only edges in $\cU$ that have not been assigned a direction  yet are edges of the connection gadgets whose non-leaf vertices lie on paths between $c_j$ and $x_{i'}$ that do not contain $x_i$ for each 2-clause $c_j = (x_i\vee x_{i'})$. By construction, $x_i$ is a tree vertex in $\cO'$ if and only if $x_{i'}$ is a reticulation in $\cO'$. Assume that $x_i$ is a tree vertex in $\cO'$. Then $(x_i, q)$ is an arc in $\cO'$ and $\cO$, where $q$ is the neighbour of $x_i$ in a connection gadget $\cG_2^1$. Furthermore, by Lemma~\ref{lem:connection_gadget}(iii), $(q',x_{i'})$ is an arc in $\cO'$, where $q'$ is the neighbour of $x_{i'}$ in $\cG_2^1$. In $\cU$, there exists a connection gadget  $\cG_2^2$ whose non-leaf vertices lie on paths from $c_j$ to $x_{i'}$  that do not contain $x_i$. Since $c_j$ is a degree-2 vertex in $\cU$ and $(q',c_j)$ is an arc of $\cO'$ (recall that $c_j = x_{i'}$ in $\cU'$), we direct the edges of $\cG_2^2$ as shown in Figure~\ref{fig:connection_gadget}(c) which yields a strongly $\mathcal{TC}$-compatible orientation for $\cG_2^2$. Then $c_j$ has in-degree 1 and out-degree 1 in $\cO$, and $c_j$ has a child in $\cG_2^2$ that is a tree vertex. Replacing Figure~\ref{fig:connection_gadget}(c) with Figure~\ref{fig:connection_gadget}(b), the case for when $x_i$ is a reticulation is similar and omitted. As $\cO'$ is a funneled $\cC_A$-orientation of $\cU'$, it now follows that we obtain a funneled $\cC_A$-orientation $\cO$ of $\cU$ by repeating the above steps for each $2$-clause.
\end{proof}

The next theorem is an immediate consequence of Theorem~\ref{thm:hardness_at_most_5} and Lemma~\ref{lem:help}.

\begin{theorem}\label{thm:hardness_3_or_5}
Let $\cC$ be the class of tree-child networks. Then \textsc{Funneled $\mathcal{C}_A$-Orientation ($3,5$)} is NP-complete if each non-leaf vertex has degree 3 or~5.   
\end{theorem}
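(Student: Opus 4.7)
The plan is to chain Theorem~\ref{thm:hardness_at_most_5} with Lemma~\ref{lem:help}, which already do all of the substantive work; the only thing left is to verify that the modified construction gives an honest phylogenetic network of the claimed degree profile. Membership in NP is immediate, since given a candidate orientation one can check in polynomial time that it is acyclic, has the right in- and out-degree pattern (funneled), and that every non-leaf vertex has a child that is a leaf or a tree vertex.

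For NP-hardness, I would start from an instance $\cI$ of \textsc{Positive Not-All-Equal $(2,3)$-SAT} in which each variable appears in exactly three clauses, construct the unrooted pseudo network $\cU$ as in the proof of Theorem~\ref{thm:hardness_at_most_5}, and then pass to $\cU'$ by suppressing $r$ and applying the $2$-clause suppression to each clause vertex of degree~$2$, exactly as in Lemma~\ref{lem:help}. The main bookkeeping step is a degree check on $\cU'$: each variable vertex $x_i$ keeps degree~$5$ (one leaf edge, one edge into the caterpillar gadget, three edges into connection gadgets), every remaining clause vertex has degree~$3$, and every internal vertex of the root, connection, and caterpillar gadgets has degree~$3$ or~$5$ by construction. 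Since the only degree-$2$ vertices of $\cU$ were precisely $r$ and the $2$-clause vertices that the suppression removes, $\cU'$ is an unrooted phylogenetic network all of whose non-leaf vertices have degree $3$ or $5$. The whole construction is clearly polynomial in $n$ and $m$.

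To conclude, I would combine the two equivalences. By Theorem~\ref{thm:hardness_at_most_5}, $\cI$ is a yes-instance iff $\cU$ has a funneled $\cC_A$-orientation (where $\cC$ is tree-child pseudo networks), and by Lemma~\ref{lem:help} this in turn holds iff $\cU'$ admits such an orientation. The only mildly delicate point, and really the only place one might slip, is bridging from the pseudo-tree-child setting to the strict tree-child setting required in the statement; this is handled by the observation recorded at the start of the proof of Lemma~\ref{lem:help}, namely that because $\cU'$ has no degree-$2$ vertices, any $\cC_A$-orientation of $\cU'$ has no vertex of in-degree $1$ and out-degree $1$, and is therefore a genuine tree-child network. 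Hence the pseudo and non-pseudo versions agree on $\cU'$, finishing the reduction.
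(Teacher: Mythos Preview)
Your proposal is correct and follows exactly the route the paper takes: the paper's proof is the single sentence ``The next theorem is an immediate consequence of Theorem~\ref{thm:hardness_at_most_5} and Lemma~\ref{lem:help},'' and you have simply unpacked what ``immediate'' means here, including the degree audit and the pseudo/non-pseudo bridge already flagged in the proof of Lemma~\ref{lem:help}. Nothing is missing.
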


\section{Hardness for Other Classes of Funneled Phylogenetic Networks}\label{sec:orientation-other-classes}
In this section, we establish NP-completeness for the following decision problem.
\begin{center}
\noindent\fbox{\parbox{.95\textwidth}{
\noindent\textsc{Funneled $\mathcal{C}_R$-Orientation ($5$)}\\
\textbf{Instance.} An unrooted phylogenetic network $\cU$ such that each non-leaf vertex has degree  5.\\
\textbf{Question.} Does there exist a funneled $\mathcal{C}_R$-orientation of $\cU$?
}}
\end{center}
In particular, we consider the four classes of tree-child, tree-sibling, normal, and reticulation-visible networks and both rooting variants $A$ and $B$. %
\begin{figure}
\includegraphics[width=.85\textwidth]{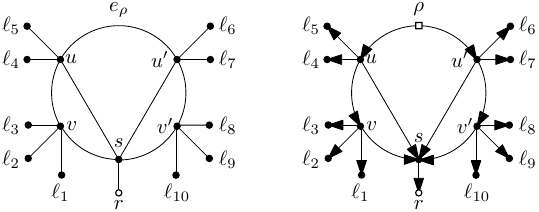}
\caption{Left: Root gadget with a single connector leaf $r$ and whose internal vertices all have degree 5. Right:  A funneled tree-child orientation for which the root is placed on the edge $e_\rho$.}
\label{fig:root_gadget_deg5}
\end{figure} 

\begin{figure}
\includegraphics[width=.375\textwidth]{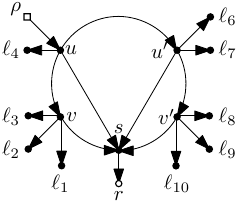}
\caption{A funneled tree-child orientation of the degree-5 root gadget that is shown on the left-hand side of Figure~\ref{fig:root_gadget_deg5} for which $\ell_5$ is chosen as root $\rho$. Note that the unique reticulation of this orientation is $s$.}
\label{fig:rooting_at_leaf}
\end{figure} 
 
We start by establishing properties of the {\it degree-$5$ root gadget} that is shown on the left-hand side of Figure~\ref{fig:root_gadget_deg5}. 

\begin{lemma}\label{lem:root_gadget_deg5}
Let $\cG$ be the degree-5 root gadget. Let $\cC$ be the class of tree-child networks, and let $\cC'$ the class of funneled phylogenetic networks. Then
\begin{enumerate}[(i)]
\item $\cG$  has a funneled $\mathcal{C}_A$-orientation such that $\rho$ subdivides $e_\rho = \{u,u'\}$, 
\item $\cG$  is $\cC'_A$-root-forcing, 
\item $\cG$  has a funneled $\mathcal{C}_B$-orientation such that $\rho=\ell_5$, and
\item $\cG$  is $\cC'_B$-root-forcing.
\end{enumerate}
\end{lemma}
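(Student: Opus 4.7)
The plan is to handle parts (i) and (iii) by direct inspection of the two depicted orientations and parts (ii) and (iv) by a forced-orientation propagation argument starting from the root neighbourhood.

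For (i) and (iii) I would verify, on the orientations displayed respectively on the right of Figure~\ref{fig:root_gadget_deg5} and in Figure~\ref{fig:rooting_at_leaf}, the following six local properties: the orientation is acyclic, the unique vertex of in-degree $0$ is $\rho$, every labelled leaf has in-degree $1$ and out-degree $0$, every remaining vertex has both in-degree and out-degree at least $1$, every reticulation has out-degree exactly $1$ (so the orientation is funneled), and every non-leaf vertex has a child that is either a labelled leaf or a tree vertex (so the orientation is tree-child). All six properties are finite checks on the pictures.

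For parts (ii) and (iv), which are the substantive statements, I would argue by contradiction. Let $v$ denote the unique neighbour of $r$ in $\cG$, and suppose $\cO$ is a funneled $\cC'_R$-orientation of the partner network of $\cG$ that places the root at or next to $r$: under Variant $A$, $\rho$ subdivides $\{r,v\}$ so $(\rho,v)$ is an arc of $\cO$, while under Variant $B$, $\rho=r$ and $(r,v)$ is an arc of $\cO$. In either case an arc enters the interior of the gadget at $v$, and each of the labelled leaves $\ell_1,\ldots,\ell_5$ is a sink whose incident arc is forced inward. Starting from these determined arcs, I would repeatedly apply Observation~\ref{obs:rule1}: any degree-$5$ interior vertex that has already received one incoming arc and two outgoing arcs must have its remaining two incident edges oriented outgoing in every $\cF$-compatible orientation. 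Because every interior vertex of the degree-$5$ root gadget is designed to be adjacent to enough labelled leaves to trigger this forcing, the orientation propagates inward from $v$ one vertex at a time and pins down every internal arc.

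The crux, and the main obstacle, is locating the ``collision'' vertex at which the propagation produces a contradiction. I expect that, after the forced directions have been traced from $v$ across the gadget, two independently-forced incoming arcs and two independently-forced outgoing arcs converge on a single internal vertex (the vertex $s$ highlighted in Figure~\ref{fig:rooting_at_leaf} is the natural candidate), giving it in-degree $\geq 2$ together with out-degree $\geq 2$, contradicting the funneled property; alternative closures of the forcing cycle yield either a directed cycle or an internal sink, both again contradicting that $\cO$ is an orientation of a rooted pseudo network. I would present the Variant $A$ case in full, noting that Variant $B$ is identical after the one-step initial shift where $r$ itself (rather than a subdividing $\rho$) is the source, which does not affect any subsequent forcing step. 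This gives (ii) and (iv) simultaneously and completes the lemma.
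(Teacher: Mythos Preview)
Your plan for (i) and (iii) is fine and matches the paper. The difficulty is in (ii) and (iv), where your propagation argument rests on a false structural assumption about the gadget.

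You write ``Let $v$ denote the unique neighbour of $r$'' and then claim that ``every interior vertex of the degree-$5$ root gadget is designed to be adjacent to enough labelled leaves to trigger this forcing.'' In the paper's labelling the neighbour of $r$ is the vertex $s$, and $s$ is adjacent to \emph{no} leaves at all: its five neighbours are $r,u,u',v,v'$. Consequently Observation~\ref{obs:rule1} does not apply at $s$, and the propagation cannot start ``from $v$ one vertex at a time'' as you describe. What \emph{is} true is that each of $u,u',v,v'$ has at least two adjacent leaves and (being a non-root vertex of a rooted network) at least one incoming arc, so the funneled condition forces each of them to have in-degree~$1$ and out-degree~$4$. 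But this leaves the orientation at $s$ and along the internal edges $\{u,v\},\{u,u'\},\{u',v'\}$ undetermined, contrary to your claim that propagation ``pins down every internal arc.''

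The paper therefore proceeds by a genuine case split on $s$: either $s$ has in-degree~$1$ and out-degree~$4$, in which case both $(s,u)$ and $(s,v)$ are arcs and whichever way $\{u,v\}$ is directed gives one of $u,v$ in-degree~$\geq 2$ together with out-degree~$\geq 2$; or $s$ has in-degree~$4$ and out-degree~$1$, in which case one traces the forced arcs around the $5$-cycle $s,u,u',v',s$ (using that $u',v'$ have in-degree~$1$) and then a further two-way split on the direction of $\{u,v\}$ produces a directed cycle in either sub-case. Your sketch anticipates that the obstruction might be ``either'' a degree violation ``or'' a directed cycle, which is correct, but the mechanism that produces these alternatives is this case analysis at the leafless vertex $s$, not a single deterministic propagation. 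You need to replace the propagation claim with that case split.
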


\begin{proof}
The orientation that is shown in Figures~\ref{fig:root_gadget_deg5} (right-hand side) and~\ref{fig:rooting_at_leaf} is a funneled $\mathcal{C}_A$-orientation and a funneled $\mathcal{C}_B$-orientation, respectively, of the partner network of $\cG$ which only differs from $\cG$ by viewing $r$ as a non-connector leaf. This establishes (i) and (iii). Now, assume towards a contradiction that $\cG$ does not satisfy (ii). Then there exists a $\cC'_A$-orientation $\cO$ of $\cG$ such that $\rho$ subdivides $\{s,r\}$. Then  $(\rho, s)$ is an arc in $\cO$. Furthermore, each vertex in $\{u,u',v,v'\}$ has in-degree~1 and out-degree~4 since each such vertex has at least two adjacent leaves and $\cO$ is a $\cC'_A$-orientation. We next distinguish two cases.

First, suppose that $s$ has in-degree~1 and out-degree~4 in $\cO$. Then $(s, u)$ and $(s, v)$ are arcs in $\cO$. Since either $(u,v)$ or $(v,u)$ is an arc in $\cO$, $u$ or $v$ has in-degree at least $2$ and out-degree at least $2$; thereby contradicting that $\cO$ is a $\cC'_A$-orientation of $\cG$. Second, suppose that $s$ has in-degree~4 and out-degree~1. By symmetry, we may assume that $(u,u')$ is an arc in $\cO$. Since $u'$ and $v'$ have in-degree~1 and out-degree~4, it follows that $(u',v')$, $(u',s)$ and $(v',s)$ are arcs in $\cO$. Now, as $v$ has in-degree~1 and out-degree 4, we have that either $(s,v)$ and $(v,u)$, or $(u,v)$ and $(v,s)$ are arcs in $\cO$. In the former case, we get the directed cycle $(s,v), (v,u), (u,u'), (u',v'), (v',s)$. Hence, we may assume that $(u,v)$ and $(v,s)$ are arcs in $\cO$. As $s$ has in-degree 4 and out-degree 1, $(s,u)$ is an arc in $\cO$. This implies that $\cO$ contains the directed cycle $(u,v), (v, s), (s,u)$; thereby again contradicting that $\cO$ is a $\cC'_A$-orientation of $\cG$. Combining both cases establishes (ii). 

Lastly, assume towards a contradiction that $\cG$ does not satisfy (iv). Then there exists a $\cC'_B$-orientation $\cO$ of $\cG$ such that $r$ is chosen to be the root of $\cO$.  Then  $(r, s)$ is an arc in $\cO$. The proof of (iv) can now be established in the same way as the proof of (ii).
\end{proof}

To establish the next theorem, the construction from \textsc{Positive Not-All-Equal (2,3)-SAT}  as established in the last section cannot be used.  Instead, we reduce from \textsc{Positive 1-in-3 SAT}, replace the binary root  gadget as shown in Figure~\ref{fig:root_gadget} with a degree-5 root gadget, omit the connection gadget as shown in Figure~\ref{fig:connection_gadget} altogether,  and use clause vertices that each have two adjacent leaves and three adjacent variable vertices. Roughly speaking, the construction results in a phylogenetic network such that any funneled orientation of the constructed unrooted phylogenetic network enforces all clause vertices to be tree vertices by Observation~\ref{obs:rule1}.

\begin{theorem}\label{thm:hardness_exactly_5}
Let $\cC$ be the class of rooted phylogenetic networks. Then \textsc{Funneled $\mathcal{C}_A$-Orientation} $(5)$ is NP-complete.  
\end{theorem}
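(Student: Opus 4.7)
The plan is to verify NP-membership by noting that any orientation of $\cU$ following Variant $A$ can be checked for the funneled and phylogenetic-network properties in polynomial time, and then to establish NP-hardness via a polynomial-time reduction from \textsc{Positive 1-in-3 SAT} in which each variable appears in exactly three clauses (a variant known to remain NP-hard, and enforceable by a standard padding argument if needed). Given such an instance $\cI$, I would construct an unrooted phylogenetic network $\cU$ by combining the degree-5 root gadget with the caterpillar gadget $\cG_{n+1}^c$: identify the two connector leaves labelled $r$ and suppress the resulting degree-2 vertex so that the two neighbouring internal vertices remain of degree $5$. I would attach a new pendant leaf $\ell_i$ to each connector leaf $x_i$ of the caterpillar, and for each clause $c_j=(x\vee x'\vee x'')$ introduce a new vertex $c_j$ adjacent to two fresh leaves and to $x,x',x''$. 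A quick degree count confirms that every non-leaf vertex has degree exactly $5$: each variable vertex has one caterpillar edge, one pendant leaf, and three clause edges; each clause vertex has two pendant leaves and three variable edges; and all root-gadget and caterpillar internal vertices retain their degree.

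For the forward direction, assume $\beta$ is a satisfying $1$-in-$3$ assignment for $\cI$. I would orient the degree-5 root gadget with $\rho$ on $e_\rho$ as in Lemma~\ref{lem:root_gadget_deg5}(i), orient the caterpillar as in Lemma~\ref{lem:caterpillar_gadget} so that each arc $(p_i,x_i)$ is present, and direct each $\{x_i,\ell_i\}$ towards $\ell_i$. Finally, I would orient each variable's three clause-edges according to $\beta$: TRUE variables become tree vertices with all three clause-edges directed outward (in/out-degree $(1,4)$), whereas FALSE variables become reticulations with all three clause-edges directed inward (in/out-degree $(4,1)$, whose single out-arc is $\{x_i,\ell_i\}$). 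Since each clause has exactly one TRUE variable under $\beta$, each clause vertex acquires in-degree $1$ and out-degree $4$, so it is a tree vertex. Acyclicity is immediate from the topological layering root gadget $\to$ caterpillar $\to$ variables $\to$ clauses $\to$ leaves, and the resulting orientation is a funneled phylogenetic network.

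For the backward direction, assume $\cU$ admits a funneled $\cC_A$-orientation $\cO$. Adapting Corollary~\ref{c:root_gadget} to the degree-5 root gadget via Lemma~\ref{lem:root_gadget_deg5}(ii), the root of $\cO$ must subdivide an edge inside the root gadget; this forces an arc directed into the caterpillar at $p_n$ and hence, by Lemma~\ref{lem:caterpillar_gadget}(ii), forces each $(p_i,x_i)$. The crucial step is then to read off a $1$-in-$3$ assignment from $\cO$. At each clause vertex $c_j$ the two pendant leaves contribute two out-arcs, so the funneled condition rules out in-degree $\geq 2$ (which would force out-degree $1$) and the fact that $c_j$ is not the root rules out in-degree $0$, leaving in/out-degree $(1,4)$, i.e.\ exactly one variable-edge directed into $c_j$. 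At each variable vertex $x_i$ the arc $(p_i,x_i)$ and the out-arc to $\ell_i$ are already present, and the funneled condition then forces its three remaining edges to be uniformly directed, either all out (making $x_i$ a tree vertex) or all in (making $x_i$ a reticulation of in/out-degree $(4,1)$); any mixed pattern would produce in-degree $\geq 2$ together with out-degree $\geq 2$, contradicting funnelness. Setting $\beta(x_i)=T$ in the tree case and $\beta(x_i)=F$ in the reticulation case yields a truth assignment with exactly one TRUE variable per clause. The main obstacle I anticipate is exactly this uniform-clause-edge step at the variable vertices, which is what the choice of a single pendant leaf together with the in-arc from $p_i$ is designed to enforce via Observation~\ref{obs:rule1}; the exact-three-occurrence restriction on $\cI$ is what makes this degree bookkeeping work out to $5$ on the nose.
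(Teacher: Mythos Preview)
Your proposal is correct and follows essentially the same approach as the paper: the same reduction from \textsc{Positive 1-in-3 SAT} with three occurrences per variable, the same degree-$5$ root gadget plus caterpillar construction with clause vertices carrying two pendant leaves, and the same degree bookkeeping in both directions (your backward argument is in fact slightly cleaner than the paper's case split, deriving the $(1,4)$ profile at each $c_j$ directly before reading off $\beta$). One small imprecision worth fixing: your forward-direction acyclicity claim via the layering ``root gadget $\to$ caterpillar $\to$ variables $\to$ clauses $\to$ leaves'' is not literally correct, since arcs run \emph{from} clause vertices \emph{to} false variable vertices; the correct topological order is root gadget $\to$ caterpillar $\to$ true variables $\to$ clauses $\to$ false variables (whose only out-arc is to $\ell_i$), which is exactly the cycle-chasing argument the paper borrows from the proof of Theorem~\ref{thm:hardness_at_most_5}.
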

\begin{proof}
Let $\cU$ be an unrooted pseudo network, and let $\cO$ be an orientation of $\cU$ following Variant $A$. Since it can be checked in polynomial time if $\cO$ is a $\mathcal{C}_A$-orientation of $\cU$, it follows that \textsc{Funneled $\mathcal{C}_A$-Orientation}$(5)$ is in NP. 
We next establish NP-hardness using a polynomial-time reduction from the variant of \textsc{Positive 1-in-3 SAT}, where each variable appears exactly three times (see, e.g., Kratochv\'{i}l~\cite[Cor.\ 1]{kratochvil03} for a proof that the problem remains NP-complete under this restriction).

Let $\mathcal{I} = (V, C)$ be an instance of \textsc{Positive 1-in-3 SAT} such that each variable appears in exactly three clauses. Let $\cG_{n+1}^c$ be the caterpillar gadget with connector leaves $r,x_1, x_2,\ldots, x_n$, and let $\cG_1^r$ be the degree-5 root gadget with connector leaf~$r$.  We next construct an unrooted  phylogenetic network  $\cU$ in the following way.

\begin{enumerate}
\item Identify the connector leaf $r$ of $\cG_1^r$ with the connector leaf $r$ of $\cG_{n+1}^c$, and suppress the resulting degree-2 vertex.
\item Attach a new leaf $\ell_i$ to each connector leaf $x_i$ with $i\in\{1,2,\ldots,n\}$ via a new edge. 
\item Let $c_1,c_2,\ldots,c_m$ be new vertices, let $i\in\{1,2,\ldots,n\}$, and let $j\in\{1,2,\ldots,m\}$. For each $x_i$ and $c_j$ such that $x_i$ is a variable of $c_j$, add the edge $\{x_i,c_j\}$. Furthermore, attach two new leaves to each $c_j$.
\end{enumerate}
We may assume for the remainder of the proof that all leaves of $\cU$ have a distinct label and that the leaf set of $\cU$ is $X$. By construction, each non-leaf vertex in $\cU$ has degree 5.
Figure~\ref{fig:construction-deg5} shows and example of this construction for the following Boolean formula 
\begin{eqnarray}\label{eq:two}
(x_1\vee x_2\vee x_3)\wedge (x_2\vee x_3\vee x_4)\wedge  (x_3\vee x_4\vee x_5)\wedge \nonumber \\
 (x_4\vee x_5\vee x_6)\wedge  (x_5\vee x_6\vee x_1)\wedge  (x_6\vee x_1\vee x_2).
\end{eqnarray}

\begin{figure}
\includegraphics[width=.78\textwidth]{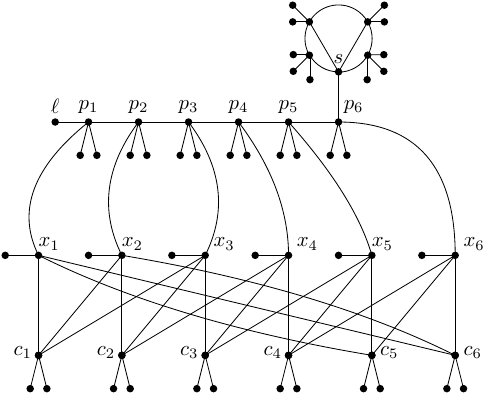}
\caption{The unrooted phylogenetic network that is constructed from the Boolean formula~(\ref{eq:two}) of  \textsc{Positive 1-in-3 SAT}. Details are given in the proof of Theorem~\ref{thm:hardness_exactly_5}. }
\label{fig:construction-deg5}
\end{figure} 

To complete the proof, we follow a similar approach as in the proof of Theorem~\ref{thm:hardness_at_most_5}.

\begin{sublemma}\label{claim-2}
$\cI$ is a yes-instance if and only if $\cU$ has a funneled $\mathcal{C}_A$-orientation.
\end{sublemma}
\begin{proof}
First, let $\beta\colon V \rightarrow \{T, F\}$ be a truth assignment that sets exactly one variable of each clause in~$C$ to $T$. We next construct a funneled $\cC_A$-orientation $\cO$ of $\cU$. As in the proof of Theorem~\ref{thm:hardness_at_most_5}, instead of referring to edges of $\cU$, we sometimes refer to edges of $\cG_1^r$ and $\cG_{n+1}^c$ when assigning a direction. We start by subdividing the edge $\{u,u'\}$ of $\cG_1^r$ with a new root vertex $\rho$ and direct the edges of the resulting root gadget as shown on the right-hand side of Figure~\ref{fig:root_gadget_deg5}. Noting that the edge $\{r,s\}$ in $\cG_1^r$ corresponds to the edge $\{p_n,s\}$ in $\cU$, this assignment of directions to edges implies that the arc $(s,p_n)$ is in $\cO$. We next direct the edges of $\cG_{n+1}^c$ (except for the edge $\{p_n,r\}$ which also corresponds to the edge $\{p_n,s\}$ in $\cU$ and has already been directed) as shown on the right-hand side of Figure~\ref{fig:caterpillar_gadget}. Now, for each $i\in\{1,2,\ldots,n\}$, let $c_j$, $c_{j'}$, and $c_{j''}$ be the clause vertices that are neighbours of $x_i$ in $\cU$. Depending on $\beta(x_i)$ we direct the three edges $\{c_{j},x_i\}$, $\{c_{j'},x_i\}$, and $\{c_{j''},x_i\}$ as follows.
\begin{enumerate}[(i)]
\item $(x_i,c_{j})$, $(x_i,c_{j'})$, and $(x_i,c_{j''})$ if $\beta(x_i)=T$, and
\item $(c_{j},x_i)$, $(c_{j'},x_i)$, and $(c_{j''},x_i)$ if $\beta(x_i)=F$.
\end{enumerate}
At this point, the only edges of $\cU$ that have not yet been assigned a direction are incident with a leaf and are directed towards that leaf in $\cO$.  By construction of $\cO$, the only vertex with in-degree 0 is $\rho$ and the set of vertices with out-degree 0 is $X$. Since $\beta$ sets exactly one variable of each clause to $T$, each clause vertex $c_j$ is a tree vertex in $\cO$ with  in-degree 1 and out-degree 4. Furthermore, each variable vertex $x_i$ is either a tree vertex with in-degree 1 and out-degree 4, or a reticulation with in-degree 4 and out-degree 1 in $\cO$. It now follows from Lemmas~\ref{lem:caterpillar_gadget} and~\ref{lem:root_gadget_deg5}(i) that $\cO$ is funneled. To see that $\cO$ is also acyclic, we can show that no vertex in $\{x_1,x_2,\ldots,x_n,c_1,c_2,\ldots,c_m\}$ lies on a directed cycle of $\cO$ by using the same `acyclicity argument' as that in the proof of~\ref{claim:hardness_at_most_5-sub}. Hence, $\cO$ is a funneled $\cC_A$-orientation of $\cU$.

Second, let $\cO$ be a funneled $\cC_A$-orientation of $\cU$. Using the degree-5 root gadget instead of the root gadget that is shown in Figure~\ref{fig:root_gadget}, and Lemma~\ref{lem:root_gadget_deg5}(ii) instead of Lemma~\ref{lem:root_gadget}(ii) in the proof of Corollary~\ref{c:root_gadget}, it is straightforward to show that $\rho$ subdivides an edge of $\cU$ that is an edge of $\cG_1^r$ and not $\{r,s\}$.  Hence, $(s, p_n)$ is an arc in $\cO$. It then follows by Lemma~\ref{lem:caterpillar_gadget}(ii)  that, for each $i \in \{1,2,\ldots, n\}$, $(p_i, x_i)$ is an arc in $\cO$. We now define a truth assignment $\beta \colon V \rightarrow \{T, F\}$ as follows. For each $i \in \{1,2,\ldots,n\}$, we set $\beta(x_i) = T$ if $x_i$ is a tree vertex and $\beta(x_i) = F$ if $x_i$ is a reticulation in $\cO$. Assume towards a contradiction that there exists a clause $c_j\in C$ with $c_j=(x\vee x'\vee x'')$ and $j\in\{1,2,\ldots,m\}$ such that $\beta$ does not set exactly one variable of $c_j$ to $T$. We distinguish three cases: 
\begin{enumerate}[(i)]
\item If $\beta(x) = \beta(x') = \beta(x'') = T$, then  $x, x'$ and $x''$ are tree vertices and, thus, $(x, c_j)$, $(x', c_j)$ and $(x'', c_j)$ are arcs in $\cO$. Since the other two vertices adjacent to $c_j$ are leaves, it follows that $c_j$ has in-degree 3 and out-degree 2; a contradiction to $\cO$ being a funneled orientation. 
\item If $\beta(x) = \beta(x') = T$ and $\beta(x'') = F$, then $x, x'$ are tree vertices and $x''$ is a reticulation and, thus, $(x, c_j)$, $(x', c_j)$ and $(c_j, x'')$ are arcs in $\cO$. But then, due to the two adjacent leaves of $c_j$, it follows that $c_j$ has in-degree 2 and out-degree 3; again a contradiction.
\item If $\beta(x) = \beta(x') = \beta(x'') = F$, then $x, x'$ and $x''$ are reticulations and, thus, $(c_j, x)$, $(c_j, x')$ and $(c_j, x'')$ are arcs in $\cO$. But then, due to the two adjacent leaves of $c_j$, it follows that $c_j$ has in-degree 0 and out-degree 5; a final contradiction.
\end{enumerate}
Hence, $\beta$ sets exactly one variable of each clause in $C$ to $T$.
\end{proof}
We conclude the proof of the theorem by noting that $\cU$ can be constructed in polynomial time and has a size that is polynomial in $n$ and $m$. 
\end{proof}

We now turn to \textsc{Funneled $\mathcal{C}_B$-Orientation} $(5)$, where $\cC$ is again the class of  rooted phylogenetic networks and the root is placed following Variant $B$. More precisely, given an unrooted phylogenetic network $\cU$, this variant chooses a vertex of $\cU$ to be the root and assigns a direction to each edge. Consider the degree-5 root gadget $\cG$ as shown on the left-hand side of Figure~\ref{fig:root_gadget_deg5}. Suppose that $\cU$ contains $\cG$ as a pending subgraph.  First, by Lemma~\ref{lem:root_gadget_deg5}(iv), any funneled $\cC_B$-orientation $\cO$ of $\cU$ chooses a vertex of $\cG$ that is not $r$ as the root. 
Second, Lemma~\ref{lem:root_gadget_deg5}(iii) shows that $\cG$ has a funneled $\cC_B$-orientation. The next corollary, which is a slight strengthening of a result by Bulteau et al.~\cite[Cor.\ 4]{bulteau23}, shows NP-hardness for graphs with maximum degree 5, now follows from an argument that is analogous to that in the proof of Theorem~\ref{thm:hardness_exactly_5}.

\begin{corollary}\label{c:hardness_general_exactly_5}
Let $\cC$ be the class of rooted phylogenetic networks. Then \textsc{Funneled $\mathcal{C}_B$-Orientation} $(5)$ is NP-complete. 
   
\end{corollary}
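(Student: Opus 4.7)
The plan is to follow the proof of Theorem~\ref{thm:hardness_exactly_5} almost verbatim, adjusting only where Variant $A$ rooting must be replaced by Variant $B$ rooting. Membership in NP is immediate since verifying that an orientation is a funneled $\cC_B$-orientation takes polynomial time. For hardness, I would reduce from the NP-hard restriction of \textsc{Positive 1-in-3 SAT} in which each variable appears in exactly three clauses, producing the same unrooted phylogenetic network $\cU$ as constructed in the proof of Theorem~\ref{thm:hardness_exactly_5}; by construction every non-leaf vertex of $\cU$ has degree exactly $5$.

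For the forward direction, given a $1$-in-$3$-satisfying assignment $\beta$, instead of subdividing $\{u,u'\}$ with a new root I would choose $\ell_5$ of the degree-$5$ root gadget $\cG_1^r$ as the root and orient $\cG_1^r$ as in Figure~\ref{fig:rooting_at_leaf}; by Lemma~\ref{lem:root_gadget_deg5}(iii) this yields a funneled phylogenetic orientation of $\cG_1^r$, and in particular produces the arc $(s, p_n)$ in $\cU$ (because $r$ has been suppressed, so $\{r,s\}$ and $\{r,p_n\}$ merge into $\{s,p_n\}$). I would then orient the caterpillar gadget as in Figure~\ref{fig:caterpillar_gadget} and orient each edge $\{x_i, c_j\}$ according to $\beta(x_i)$ exactly as in the proof of Theorem~\ref{thm:hardness_exactly_5}. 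The acyclicity, funneled-ness, and leaf-labelling checks from that proof carry over unchanged.

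For the backward direction, let $\cO$ be a funneled $\cC_B$-orientation of $\cU$ with root $\rho$. The crucial step is to show that $\rho$ lies in $\cG_1^r$, so that $(s, p_n)$ is forced to be an arc of $\cO$; this plays the role that Corollary~\ref{c:root_gadget} played in the Variant $A$ proof. Assume for contradiction that $\rho$ lies outside $\cG_1^r$. Because every leaf of $\cG_1^r$ must be reachable from $\rho$ and $s$ is the only vertex of $\cG_1^r$ connected to the rest of $\cU$, the arc $(p_n, s)$ must be in $\cO$. Restricted to $\cG_1^r$ this is exactly the configuration analysed in the proof of Lemma~\ref{lem:root_gadget_deg5}(iv) (with $p_n$ playing the role of $r$); the same case split on whether $s$ has in-degree $1$ or $4$ within the gadget yields either a vertex violating the funneled condition or a directed cycle, a contradiction. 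Hence $\rho$ is a vertex of $\cG_1^r$ and $(s, p_n)$ is an arc of $\cO$.

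Once $(s, p_n)$ is established, Lemma~\ref{lem:caterpillar_gadget}(ii) gives that $(p_i, x_i)$ is an arc of $\cO$ for each $i\in\{1,\ldots,n\}$. Setting $\beta(x_i) = T$ if $x_i$ is a tree vertex in $\cO$ and $\beta(x_i) = F$ if $x_i$ is a reticulation, the same three-case in- and out-degree analysis on each clause vertex $c_j$ as in the proof of Sublemma~\ref{claim-2} (using that $c_j$ has two adjacent leaves and must be funneled) shows that $\beta$ sets exactly one variable of each clause to $T$. The main obstacle is the embedded-gadget version of Lemma~\ref{lem:root_gadget_deg5}(iv) described above; once that is in place, the rest of the argument is a mechanical translation of the proof of Theorem~\ref{thm:hardness_exactly_5}.
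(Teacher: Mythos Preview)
Your proposal is correct and follows essentially the same approach as the paper. The paper's own argument is just the short paragraph preceding the corollary: it invokes Lemma~\ref{lem:root_gadget_deg5}(iii) for the forward direction and Lemma~\ref{lem:root_gadget_deg5}(iv) (in its pending-subgraph form) for the backward direction, and then says the rest is analogous to the proof of Theorem~\ref{thm:hardness_exactly_5}; you have simply spelled out these analogous steps in more detail.
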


We remark that the root gadget as shown in Figure~\ref{fig:root_gadget} is not $\cC_R$-root-forcing for each $R\in\{A,B\}$ and can therefore not be used to establish Theorem~\ref{thm:hardness_exactly_5} or Corollary~\ref{c:hardness_general_exactly_5} for networks whose internal vertices have degree 3 or 5.

The following  lemma is the key ingredient in showing that \textsc{Funneled $\mathcal{C}_R$-Orientation} $(5)$ is NP-complete for the class of tree-child networks as well as for any class of rooted phylogenetic networks that contains the class of tree-child networks. 

\begin{lemma}\label{l:equiv-funneled}
Let $\cC$ be the class of rooted phylogenetic networks, let $\cC'$ be the class of tree-child networks. Furthermore, let $\cU$ be an unrooted phylogenetic network that is constructed from an instance of   \textsc{Positive 1-in-3 SAT} as described in the proof of Theorem~\ref{thm:hardness_exactly_5}. Then, for each $R\in\{A,B\}$, $\cU$ has a funneled $\cC_R$-orientation if and only if $\cU$ has a funneled $\cC'_R$-orientation.
\end{lemma}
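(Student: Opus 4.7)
The plan is to establish both directions of the equivalence. The $(\Leftarrow)$ direction is immediate: every tree-child network is a rooted phylogenetic network, so any funneled $\cC'_R$-orientation of $\cU$ is already a funneled $\cC_R$-orientation. For the $(\Rightarrow)$ direction, I would fix an arbitrary funneled $\cC_R$-orientation $\cO$ of $\cU$ and verify the tree-child condition at every non-leaf vertex of $\cO$.

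The key structural observation is that any vertex of $\cU$ adjacent to two or more leaves is automatically a tree vertex (or the root) in $\cO$: the edges to those leaves are forced outward, giving out-degree at least~$2$, which is incompatible with being a reticulation in a funneled orientation. Such a vertex has at least two leaf children and is therefore tree-child. This immediately covers the caterpillar vertices $p_1, \ldots, p_n$, the clause vertices $c_1, \ldots, c_m$, and the four internal root-gadget vertices $u, u', v, v'$ (each having two leaf neighbors in $\cU$ by the construction of the degree-$5$ root gadget in Figure~\ref{fig:root_gadget_deg5}). For each variable vertex $x_i$, the arc $(x_i, \ell_i)$ is forced since $\ell_i$ is a leaf attached in Step~$2$ of the construction, so $x_i$ has a leaf child and is tree-child regardless of whether it is oriented as a tree vertex or as a reticulation.

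To handle the remaining vertex $s$ of the root gadget and the root $\rho$ itself, I would first prove an analog of Corollary~\ref{c:root_gadget} for the degree-$5$ root gadget using Lemma~\ref{lem:root_gadget_deg5}(ii) for $R=A$ and (iv) for $R=B$: if $\rho$ lay outside the root gadget, then the arc $(p_n, s)$ would be forced by the reachability of the gadget's leaves, and the case analysis in the proof of Lemma~\ref{lem:root_gadget_deg5}(ii) or (iv) applied verbatim within the gadget would produce either a directed cycle or a funneled violation. Hence $\rho$ lies inside the root gadget, and reachability of $p_n$ from $\rho$ then forces the arc $(s, p_n)$ in $\cO$, so $s$ has the tree-vertex child $p_n$ and is tree-child. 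For $\rho$ itself: under Variant~$A$, $\rho$ subdivides an edge of the gadget whose two endpoints lie in $\{s, u, u', v, v'\}$, and at least one endpoint belongs to the tree vertices $\{u, u', v, v'\}$; under Variant~$B$, $\rho$ is either one of $u, u', v, v'$ (with leaf children), equal to $s$ (with child $p_n$), or a leaf of the gadget whose unique neighbor is in $\{u, u', v, v'\}$.

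The main obstacle is the exhaustive case analysis on the root's position, particularly under Variant~$B$ where $\rho$ can be any vertex of the root gadget; however, the two-leaves observation together with the forced arc $(s, p_n)$ resolves every case uniformly, and no new gadgets or orientations are required beyond those already used in the proof of Theorem~\ref{thm:hardness_exactly_5}.
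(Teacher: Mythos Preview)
Your proposal is correct and follows essentially the same approach as the paper. The paper's argument is more streamlined: it simply observes that \emph{every} non-leaf vertex of $\cU$ except $s$ is adjacent to at least one leaf (so the two-leaf case split is unnecessary for verifying tree-child), and then handles $s$ exactly as you do via the forced arc $(s,p_n)$ and the fact that $p_n$ is a tree vertex; your treatment of the root $\rho$ is more explicit than the paper's, which glosses over that case.
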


\begin{proof}
We establish the proof for $R=A$. The proof for $R=B$ is similar and omitted. Clearly, any funneled $\cC'_A$-orientation of $\cU$ is also a funneled $\cC_A$-orientation of $\cU$. So suppose that $\cO$ is a funneled  $\cC_A$-orientation of $\cU$. The only vertex of $\cU$ that is not adjacent to a leaf is the vertex $s$ of the degree-5 root gadget. Since $(s,p_n)$ is an arc of $\cO$ as argued in the last paragraph of the proof of~\ref{claim-2}, it follows that $p_n$ is a child of $s$. Noting that $p_n$ has two adjacent leaves, it now follows that $p_n$ has in-degree 1 and out-degree 4 because $\cO$ is funneled. Thus $s$ has a child that is a tree vertex and $\cO$ is a funneled $\cC'_R$-orientation of $\cU$.
\end{proof}

\noindent Since the last lemma does not only hold for $\cC'$ being the class of tree-child networks but also for any class of rooted phylogenetic networks that contains the class of tree child networks, the next corollary is an immediate consequence of Theorem~\ref{thm:hardness_exactly_5}, Corollary~\ref{c:hardness_general_exactly_5}, and Lemma~\ref{l:equiv-funneled}.

\begin{corollary}
Let $\cC$ be the class of tree-child, tree-sibling, or reticulation-visible networks. Then \textsc{Funneled $\mathcal{C}_R$-Orientation} $(5)$ is NP-complete for each $R\in\{A,B\}$
\end{corollary}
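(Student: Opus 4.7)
The plan is to bootstrap from the hardness already established for the ambient class of all rooted phylogenetic networks and transfer it to each of the three listed subclasses via the equivalence in Lemma~\ref{l:equiv-funneled}. The first step is to record the inclusions: the class of tree-child networks is contained in both the class of tree-sibling networks (if $v$ is a reticulation, any parent $p$ of $v$ has, by the tree-child condition, a non-reticulation child, which is necessarily a sibling of $v$) and the class of reticulation-visible networks (a standard fact). Denote any of the three target classes by $\cC^*$; then tree-child networks $\subseteq \cC^* \subseteq$ rooted phylogenetic networks.

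Membership in NP is routine for each $\cC^*$ and each $R \in \{A,B\}$: given a candidate orientation, one checks in polynomial time whether it is acyclic, funneled, correctly rooted, and satisfies the defining property of $\cC^*$; for reticulation-visibility this amounts to a reachability computation for each reticulation, while tree-child and tree-sibling are local conditions on children/siblings.

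For NP-hardness, I would invoke the reductions from \textsc{Positive 1-in-3 SAT} used in the proofs of Theorem~\ref{thm:hardness_exactly_5} (for $R=A$) and Corollary~\ref{c:hardness_general_exactly_5} (for $R=B$), producing in polynomial time an unrooted phylogenetic network $\cU$ all of whose non-leaf vertices have degree $5$. By Theorem~\ref{thm:hardness_exactly_5} and Corollary~\ref{c:hardness_general_exactly_5}, the underlying \textsc{Positive 1-in-3 SAT} instance $\cI$ is satisfiable if and only if $\cU$ admits a funneled $\cC_R$-orientation, where $\cC$ is the class of all rooted phylogenetic networks.

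It remains to equate funneled $\cC_R$-orientability of $\cU$ with funneled $\cC^*_R$-orientability of $\cU$. One direction is trivial since $\cC^* \subseteq \cC$. For the converse, note that the proof of Lemma~\ref{l:equiv-funneled} shows something stronger than its statement: any funneled $\cC_R$-orientation of $\cU$ is already tree-child, because $\cU$ has a unique non-leaf-adjacent vertex $s$ whose unique out-neighbour $p_n$ is forced to be a tree vertex with two pendant leaves, so every internal vertex has a leaf or tree-vertex child. Since tree-child networks are contained in $\cC^*$, any such orientation lies in $\cC^*$ as well. Combining this observation with the reductions yields that $\cU$ has a funneled $\cC^*_R$-orientation iff $\cI$ is a yes-instance, establishing NP-hardness in all six cases. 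The only delicate point is the last observation that the tree-child property is forced automatically on every funneled orientation produced by the construction, but this follows directly from the argument in Lemma~\ref{l:equiv-funneled} without further work.
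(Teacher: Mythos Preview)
Your argument is correct and follows essentially the same route as the paper: the paper derives the corollary directly from Theorem~\ref{thm:hardness_exactly_5}, Corollary~\ref{c:hardness_general_exactly_5}, and the observation (implicit in the proof of Lemma~\ref{l:equiv-funneled}) that every funneled orientation of the constructed $\cU$ is already tree-child, hence lies in any superclass of tree-child networks. One small imprecision: $p_n$ need not be the \emph{unique} out-neighbour of $s$, only \emph{a} child of $s$ that is a tree vertex; this is all the argument requires, so the slip is harmless.
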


Lastly, we establish NP-completeness of \textsc{Funneled $\mathcal{C}_R$-Orientation} $(5)$ for the class of normal networks. Let $e$ be an edge of an unrooted phylogenetic network $\cU$ on $X$. We refer to the operation of subdividing $e$ with a new vertex $v$ and adding the three edges $\{v,\ell\}$, $\{v,\ell'\}$, and $\{v,\ell''\}$ such that $\ell$, $\ell'$, and $\ell''$ are leaves that are not contained in $X$ as  {\it attaching three leaves to $e$.} Now, let $\cU$ be an unrooted phylogenetic network that is constructed from an instance $\cI$ of  \textsc{Positive 1-in-3 SAT} as described in the proof of Theorem~\ref{thm:hardness_exactly_5}. Obtain an unrooted phylogenetic network $\cU'$ from $\cU$ by attaching three leaves to each edge $e$ that is incident with a vertex in $\{x_1,x_2,\ldots,x_n\}$ and not incident with a leaf as well as to the edges $\{s,u\}$, $\{s,u'\}$, and $\{s,p_n\}$, where $s$, $u$, and $u'$ are the vertices as shown on the left-hand side of Figure~\ref{fig:root_gadget_deg5}. We refer to $(\cU,\cU')$ as the {\it  network pair associated with $\cI$}. 

\begin{lemma}\label{l:normal}
Let $\cC$ be the class of rooted phylogenetic networks, and let $\cC'$ be the class of normal networks. Furthermore, let $(\cU,\cU')$ be the  network pair associated with an instance of  \textsc{Positive 1-in-3 SAT}. Then, for each $R\in\{A,B\}$, $\cU$ has a funneled $\cC_R$-orientation if and only if $\cU'$ has a funneled $\cC'_R$-orientation.
\end{lemma}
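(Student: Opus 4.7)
The plan is to prove both implications through the natural correspondence between orientations of $\cU$ and $\cU'$ induced by the three-leaf subdivision operation. Since Lemma~\ref{l:equiv-funneled} already establishes that a funneled $\cC_R$-orientation of $\cU$ exists if and only if a funneled tree-child $\cC_R$-orientation of $\cU$ exists, I may freely assume tree-childness when handling $\cU$.

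For the forward direction, I would start from a funneled tree-child orientation $\cO$ of $\cU$ and construct $\cO'$ on $\cU'$ by inheriting directions on non-subdivided edges and, for each subdivided edge $\{a,b\}$ oriented $a\to b$ in $\cO$, setting arcs $a\to v$ and $v\to b$ where $v$ is the new subdivision vertex, and directing the three pendant leaves of $v$ outward; if under Variant~$A$ the root of $\cO$ subdivides an edge that is itself subdivided in $\cU'$, I would relocate it to one of the two new halves. Each new $v$ is then a tree vertex with in-degree 1 and out-degree 4, so $\cO'$ is funneled; acyclicity persists under subdivision; and tree-childness transfers routinely, because each new $v$ has leaf children while each original vertex inherits a tree-vertex or leaf child from $\cO$ (possibly with that child replaced by a new tree vertex).

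The crux is showing that $\cO'$ has no shortcut. A putative shortcut in $\cO'$ either coincides with a shortcut of $\cO$ along a non-subdivided edge, or involves one of the new arcs $(a,v)$ or $(v,b)$; the latter is ruled out because $v$'s only non-leaf in-neighbour is $a$ and only non-leaf out-neighbour is $b$, so no alternative directed path to or from $v$ can exist. It therefore suffices to check that $\cO$ has no shortcut on a non-subdivided, non-leaf edge of $\cU$. Such edges are precisely the caterpillar edges $\{p_i,p_{i+1}\}$ and the internal edges of the degree-5 root gadget other than $\{s,u\}$, $\{s,u'\}$, and $\{s,p_n\}$. For the caterpillar edges, the unique in-neighbour of $p_i$ in $\cO$ is $p_{i+1}$, and no directed path from the descendants of $p_i$ returns to $p_i$, so the arc $(p_{i+1},p_i)$ admits no alternative path. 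For the remaining root-gadget edges, I would perform a direct inspection of the specific funneled tree-child orientation supplied by Lemma~\ref{lem:root_gadget_deg5}(i), confirming that $\{s,u\}$, $\{s,u'\}$, and $\{s,p_n\}$ are precisely the bottleneck edges through which any alternative directed path producing a shortcut must pass.

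For the reverse direction, given a funneled normal $\cC'_R$-orientation $\cO'$ of $\cU'$, each new subdivision vertex $v$ has three leaf children, forcing its out-degree to be at least three; funneledness then forces $v$ to be a tree vertex with in-degree~1 and out-degree~4, so $v$ lies on a directed path $a\to v\to b$ between its two non-leaf neighbours. Deleting the three pendant leaves of each such $v$ and then suppressing $v$ restores the original edge $\{a,b\}$ with a well-defined direction, yielding an orientation $\cO$ of $\cU$; the root of $\cO'$ sits on an edge or at a vertex corresponding under this inverse operation to a valid Variant~$R$ root placement on $\cU$, appealing to a root-forcing argument analogous to Lemma~\ref{lem:root_gadget_deg5}(ii) and~(iv) applied to the modified root gadget to rule out pathological placements on the newly introduced edges. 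Since suppression preserves in- and out-degrees of the remaining vertices and acyclicity is inherited, $\cO$ is a funneled $\cC_R$-orientation of $\cU$. The hard part is the shortcut analysis in the forward direction: the caterpillar and bipartite variable-clause edges are handled uniformly because every edge in a bipartite cycle is incident with some $x_i$ and thus subdivided, but the finite case analysis of the root-gadget orientation, confirming that $\{s,u\}$, $\{s,u'\}$, and $\{s,p_n\}$ intercept every would-be shortcut, is the only step requiring genuine structural work.
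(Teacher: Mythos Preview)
Your overall strategy---transporting orientations along the three-leaf subdivision correspondence---matches the paper's. The reverse direction is essentially the same. The forward direction, however, diverges at the shortcut argument, and there your plan has a real gap.

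You propose to reduce shortcut-freeness of $\cO'$ to checking that $\cO$ has no shortcut on the non-subdivided internal edges, and then to handle the root-gadget edges by ``direct inspection of the specific funneled tree-child orientation supplied by Lemma~\ref{lem:root_gadget_deg5}(i)''. But the hypothesis of the lemma gives you an \emph{arbitrary} funneled $\cC_R$-orientation of $\cU$; even after invoking Lemma~\ref{l:equiv-funneled} you only know that \emph{some} funneled tree-child orientation exists, not that it is the particular one depicted in Figure~\ref{fig:root_gadget_deg5}. The degree-5 root gadget admits several funneled orientations (different root placements, and the $u\leftrightarrow v$ symmetry), and you give no argument that every such orientation is shortcut-free on $\{s,v\},\{s,v'\},\{u,v\},\{u',v'\},\{u,u'\},\{v,v'\}$. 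The paper sidesteps this entirely: rather than auditing edges, it observes that in $\cO'$ the only possible reticulations are $s$ and the $x_i$, and that every parent of such a reticulation is one of the newly introduced subdivision vertices with three pendant leaves and exactly one further non-leaf neighbour. A one-line degree argument then shows no directed path can run between two such parents, so no arc into a reticulation is a shortcut. This works uniformly over all starting orientations and is the idea you are missing.

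There is a second, smaller gap in your reverse direction for Variant~$B$. You say you would exclude root placements on the newly introduced vertices by ``a root-forcing argument analogous to Lemma~\ref{lem:root_gadget_deg5}(ii) and~(iv) applied to the modified root gadget''. But the modified gadget is a different graph, and that lemma does not transfer automatically; in fact, placing the root at a subdivision vertex on $\{s,u\}$ \emph{does} permit a funneled orientation of the gadget. The paper rules this out not by root-forcing but by invoking normality of $\cO'$: such a placement forces $(t,u)$ and $(t,s)$, and since $(u,v),(v,s)$ are then arcs, $(t,s)$ is a shortcut, contradicting that $\cO'$ is normal. You should use the shortcut-free hypothesis here rather than trying to extend the root-forcing lemma.
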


\begin{proof}
Let $E$ be the subset of edges of $\cU$ that precisely contains each edge that is subdivided in obtaining $\cU'$ from $\cU$ by attaching three leaves.  Clearly, each edge that is in $\cU$ and not in $E$ is also an edge of $\cU'$. Moreover, for each edge $e=\{u,w\}$ in $E$, there exist edges $\{u,v\},\{v,w\},\{v,\ell\},\{v,\ell'\}$, and $\{v,\ell''\}$ in $\cU'$, where $\ell$ $\ell'$, and $\ell'$ are leaves. We refer to those five edges as {\it five edges associated with $e$ in $\cU'$}. Let $\cO$ be a $\cC_R$-orientation of $\cU$. If $R=A$, then it follows from the last paragraph of the proof of~\ref{claim-2} that $(s,p_n)$ is an arc of $\cO$. Similarly, if $R=B$, then follows from the paragraph prior to Corollary~\ref{c:hardness_general_exactly_5} that $(s,p_n)$ is again an arc in  $\cO$. We freely use the existence of the arc $(s,p_n)$ throughout the remainder of the proof.

First, suppose that $\cU$ has a funneled $\cC_A$-orientation $\cO$. Let $e_\rho$ be the edge of $\cU$ that is subdivided in obtaining $\cO$ from $\cU$. Apply the following three steps to obtain an orientation $\cO'$ of $\cU'$.
\begin{enumerate}[(1)]
\item \label{step:root-edge} If $e_\rho\notin E$, then subdivide $e_\rho$ of $\cU'$ with a vertex $\rho$ and direct the two edges incident with $\rho$ so that $\rho$ has in-degree 0. On the other hand, if $e_\rho=\{u_e,w_e\}$ is an edge in $E$, recall that $\{u_e,v_e\},\{v_e,w_e\}$ is a path of length two in $\cU'$ such that $v_e$ is adjacent to three leaves $\ell_e$ $\ell'_e$, and $\ell'_e$. Then subdivide $\{u_e,v_e\}$ with a vertex $\rho$ and let $(\rho,u_e),(\rho,v_e),(v_e,w_e),(v_e,\ell_e),(v_e,\ell_e'),$ and $(v_e,\ell_e'')$ be arcs in~$\cO'$.
\item \label{step:same-edge} Direct each edge of $\cU'$ that is also an edge in $\cU$ such that it has the same direction as in $\cO$.
\item \label{step:remaining-edge} For each edge $e=\{u,w\}$ in $E\setminus \{e_\rho\}$, direct the five edges associated with $e$ in $\cU'$ in the following way. If $(u,w)$ is an arc in $\cO$, then $(u,v),(v,w),(v,\ell),(v,\ell')$, and $(v,\ell'')$ are arcs in $\cO'$ and, otherwise,  $(w,v),(v,u),(v,\ell),(v,\ell')$, and $(v,\ell'')$ are arcs in $\cO'$.
\end{enumerate}
Since $\cO$ is a funneled $\cC_A$-orientation of $\cU$, it follows that $\cO'$ is a funneled $\cC_A$-orientation of $\cU'$. Moreover, as  $(s,p_n)$ is an arc of $\cO$, there is a directed path from $s$ to $p_n$ of length two in $\cO'$ whose middle vertex is a tree vertex. Since $\cO'$ is funneled and, each vertex except for $s$ is adjacent to a leaf, it  follows that $\cO'$ is a funneled $\cC_A$-orientation for $\cU'$ that is also tree-child. To see that $\cO'$ is in fact a funneled $\cC'_A$-orientation of $\cU'$, it remains to argue that $\cO'$ has no shortcut. Let $v$ be a reticulation of $\cO'$, and let $p$ and $p'$ be two distinct parents of $v$. Since $v\in \{s,x_1,x_2,\ldots,x_n\}$ it is straightforward to check that $p$ and $p'$ are two vertices of $\cO'$ each with three adjacent leaves. Assume that there is a directed path from $p$ to $p'$ in $\cO'$. Then there exists an arc $(p,u)$ such that $u$ is not a leaf and $u\ne v$. This implies that $p$ has in-degree 0 and out-degree 4; a contradiction. It now follows that $(p,v)$ is not a shortcut and, by a symmetric argument, $(p',v)$ is not a shortcut either. Thus, $\cO'$ is a funneled $\cC'_A$-orientation of $\cU'$.

Second, suppose that $\cU$ has a funneled $\cC_B$-orientation $\cO$. The root $\rho$ of $\cO$ is a vertex of $\cU$ and $\cU'$. Obtain an orientation $\cO'$ of $\cU'$  by following Steps~\eqref{step:same-edge} and~\eqref{step:remaining-edge} so that Step~\eqref{step:remaining-edge} is applied to each edge in $E$. Using the same argument as in the previous paragraph, it follows that $\cO'$ is a funneled $\cC'_A$-orientation of $\cU'$. 

Third, suppose that $\cU'$ has a funneled  $\cC'_A$-orientation $\cO'$.  Let $e_\rho$ be the edge of $\cU'$ that is subdivided in obtaining $\cO'$ from $\cU'$. Obtain an orientation $\cO$ of $\cU$ by reversing Steps (1)--(3) from above as follows.
\begin{enumerate}[(1')]
\item \label{step:root-edge-rev} If $e_\rho$ is an edge of $\cU$, subdivide $e_\rho$ in $\cU$ with a vertex $\rho$ and direct the two edges incident with $\rho$ such that $\rho$ has in-degree 0. On the other hand, if $e_\rho$ subdivides one of the five edges associated with an edge $e\in E$, then subdivide $e$ in $\cU$ with a vertex $\rho$ and direct the two edges incident with $\rho$ such that $\rho$ has in-degree 0. 
\item \label{step:same-edge-rev} Direct each edge of $\cU$ that is also an edge in $\cU'$ such that it has the same direction as in $\cO'$.
\item \label{step:remaining-edge-rev} For each $e=\{u,w\}$  in $E\setminus\{e_\rho\}$, consider the five edges associated with $e$ in $\cU'$. If  $(u,v),(v,w),(v,\ell),(v,\ell')$, and $(v,\ell'')$ are arcs in $\cO'$, then $(u,w)$ is an arc in $\cO$ and, if $(w,v),(v,u),(v,\ell),(v,\ell')$, and $(v,\ell'')$ are arcs in $\cO'$, then $(w,u)$ is an arc in $\cO$. 
\end{enumerate}
Since $\cO'$ is funneled, one of the two cases described in Step~(\ref{step:remaining-edge-rev}') applies. Furthermore, since $\cO'$ is a funneled $\cC'_A$-orientation of $\cU'$, it follows by construction that $\cO$ is a funneled $\cC_A$-orientation of $\cU$.

Lastly, suppose that  $\cU'$ has a funneled  $\cC'_B$-orientation $\cO'$. Let $\rho$ be the root of $\cO'$. We first show that $\rho$ is a vertex of $\cU$. Assume towards a contradiction that $\rho$ is not a vertex of $\cU$. Let $\{s,s'\}$ be the unique edge of $\cU'$, where $s$ is as shown on the left hand-side of Figure~\ref{fig:root_gadget_deg5} and $s'$ is the unique internal vertex introduced by attaching three leaves to $\{r,s\}$. By arguments similar to the ones used to establish Lemma~\ref{lem:root_gadget_deg5}(ii), it follows that $(s,s')$ is an arc in $\cO'$. Hence $\rho$ is a vertex that is introduced in attaching three leaves to $\{u,s\}$ or $\{u',s\}$. By symmetry, we may assume that $\rho$ is one of the four vertices $t,\ell,\ell'$ and $\ell''$, where $t$ is the unique internal vertex introduced by attaching three leaves to $\{u,s\}$. Regardless of which vertex is chosen as $\rho$, $(t,u)$ and $(t,s)$ are arcs in $\cO'$ because $\cO'$ is funneled. Furthermore, as $u$ and $v$ have in-degree 1 and out-degree~4, it follows that $(u,v)$ and, in turn, $(v,s)$ are arcs in $\cO'$. Thus, $(t,s)$ is a shortcut and therefore $\cO'$ is not a $\cC'_B$-orientation of $\cU'$; a contradiction. Hence, $\rho$ is a vertex of $\cU$, and we obtain an orientation $\cO$ of $\cU$ by following Steps~(\ref{step:same-edge-rev}') and~(\ref{step:remaining-edge-rev}') such that Step~(\ref{step:remaining-edge-rev}') is applied to each edge in $E$. Then it follows again by construction that $\cO$ is a funneled $\cC_B$-orientation of $\cU$. 
\end{proof}

The next corollary follows from Theorem~\ref{thm:hardness_exactly_5}, Corollary~\ref{c:hardness_general_exactly_5}, and Lemma~\ref{l:normal}.
\begin{corollary}
Let $\cC$ be the class of normal networks. Then \textsc{Funneled $\mathcal{C}_R$-Orientation} $(5)$ is NP-complete for each $R\in\{A,B\}$.
\end{corollary}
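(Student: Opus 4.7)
The plan is to chain the three cited results, and no genuinely new argument is required. Fix $R\in\{A,B\}$ and write $\cC$ for the class of normal networks (as in the corollary) and $\cC^{\star}$ for the class of all rooted phylogenetic networks. Given an instance $\cI$ of \textsc{Positive 1-in-3 SAT} in which every variable occurs in exactly three clauses (still NP-complete by Kratochv\'{i}l), I would first construct the unrooted phylogenetic network $\cU$ exactly as in the proof of Theorem~\ref{thm:hardness_exactly_5}, and then apply the leaf-attachment construction preceding Lemma~\ref{l:normal} to obtain the unrooted phylogenetic network $\cU'$. Both steps are clearly polynomial-time in $|\cI|$.

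Before invoking the equivalences, I would verify that $\cU'$ is a legal input to \textsc{Funneled $\cC_R$-Orientation} $(5)$: every internal vertex inherited from $\cU$ already has degree $5$ by Theorem~\ref{thm:hardness_exactly_5}, while each freshly introduced subdivision vertex is incident to the two endpoints of the subdivided edge and to three newly attached leaves, and so also has degree $5$. Hence every non-leaf vertex of $\cU'$ has degree exactly~$5$, as required.

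The core of the proof is then just a concatenation of two equivalences. By Theorem~\ref{thm:hardness_exactly_5} (when $R=A$) or Corollary~\ref{c:hardness_general_exactly_5} (when $R=B$), $\cI$ is a yes-instance if and only if $\cU$ admits a funneled $\cC^{\star}_R$-orientation; and by Lemma~\ref{l:normal} the latter holds if and only if $\cU'$ admits a funneled $\cC_R$-orientation. For membership in NP I would simply guess an orientation of $\cU'$ and verify in polynomial time that it is acyclic, funneled, tree-child, and shortcut-free using straightforward directed-graph traversals.

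Because Lemma~\ref{l:normal} has already absorbed the technical content (in particular the careful tracking of where the root can be placed and the check that no shortcut is created around the degree-$5$ root gadget), the only things to check here are the degree-$5$ preservation under the subdivision and that both rooting variants are handled uniformly, both of which are immediate from the construction and from the statement of the lemma. I do not foresee any real obstacle.
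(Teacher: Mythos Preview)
Your proposal is correct and follows exactly the same approach as the paper, which simply states that the corollary follows from Theorem~\ref{thm:hardness_exactly_5}, Corollary~\ref{c:hardness_general_exactly_5}, and Lemma~\ref{l:normal}. You have spelled out the chaining of the reductions and the degree-$5$ check for $\cU'$ in slightly more detail than the paper does, but the underlying argument is identical.
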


\bibliographystyle{alpha}

\end{document}